\documentclass[twoside,11pt]{article} 
\usepackage{jmlr2e} 
\usepackage{color,graphicx}
\usepackage{eurosym}
\usepackage{amsmath}
\usepackage{amssymb}
\usepackage{verbatim}
\usepackage{url}
\usepackage{natbib}
\usepackage{subfig}
\usepackage{epsfig}
\usepackage{float}
\usepackage{algorithm}
\usepackage{algpseudocode}
\usepackage{pslatex}

\DeclareMathOperator*{\argmin}{argmin}

\DeclareMathOperator*{\Gam}{Gamma}
\DeclareMathOperator*{\Dir}{Dirichlet}

\def\bS{\mathbf{S}}
\def\tbS{\tilde{\bS}}
\def\tS{\tilde{S}}

\def\tT{\tilde{T}}

\def\cT{\mathcal{T}}
\def\cS{\mathcal{S}}
\newcommand{\M}{{\mathcal M}}
\newcommand{\Mu}{\mathcal{M}^{\cup}}

\newcommand{\dif}{\mathrm{d}}
\newcommand{\muu}{\mu^{\cup}}
\def\l({\left(}
\def\r){\right)}
\def\ind{1}

\jmlrheading{1}{2013}{1-26}{9/12; Revised 7/13}{}{Vinayak Rao, Yee Whye Teh}

\ShortHeadings{Fast MCMC Sampling for MJPs and Extensions}{Rao and Teh}
\firstpageno{1}

\begin{document}

\title{Fast MCMC Sampling for Markov Jump Processes and Extensions}
\author{\name Vinayak Rao\thanks{Corresponding author} \email vrao@gatsby.ucl.ac.uk \\
\addr Department of Statistical Science \\ Duke University \\ Durham, NC, 27708-0251, USA
\AND 
\name Yee Whye Teh \email y.w.teh@stats.ox.ac.uk \\
\addr Department of Statistics \\ 1 South Parks Road \\ Oxford OX1 3TG, UK \\ 
\AND
\name Accepted for publication in the Journal of Machine Learning Research (JMLR)
}

\editor{Christopher Meek}

\maketitle

\begin{abstract}%
Markov jump processes (or continuous-time Markov chains) are a simple and important class of continuous-time dynamical systems. 
In this paper, we tackle the problem of simulating from the posterior distribution over paths in these models, given partial and noisy observations.  Our approach is an auxiliary variable Gibbs sampler, and is based on the idea of 
\emph{uniformization}.  This sets up a Markov chain over paths by alternately sampling a finite set of
virtual jump times given the current path, and then sampling a new path given the set of extant and virtual jump times. The first step involves simulating
a piecewise-constant inhomogeneous Poisson process, while for the second, we use a standard hidden Markov model forward 
filtering-backward sampling algorithm. Our method is exact and does not involve approximations like time-discretization. We demonstrate how our sampler extends naturally to MJP-based models like Markov-modulated Poisson processes and continuous-time
Bayesian networks, and show significant computational benefits over state-of-the-art MCMC samplers for these models.
\end{abstract}%
\begin{keywords} 
  Markov jump process, MCMC, Gibbs sampler, uniformization, Markov-modulated Poisson process, continuous-time Bayesian network
\end{keywords} 

\section{Introduction}

The Markov jump process (MJP) extends the discrete-time Markov chain to continuous time, and forms a simple and popular class of continuous-time dynamical systems. 
In Bayesian modelling applications, the MJP is widely used as a prior distribution over the piecewise-constant evolution of the state of a system.
The Markov property of the MJP makes it both a realistic model for various physical and chemical systems, as well as a convenient approximation for more complex 
phenomena in biology, finance, queuing systems etc. 
In chemistry and biology, stochastic kinetic models use the state of an MJP to represent the sizes of various interacting \emph{species} 
\citep[e.g.,][]{gillespie1977exact, GolWilk11}. In queuing applications, the state may represent the number of pending jobs in a queue \citep{Breuer2003, tijms1986}, with the arrival
and processing of jobs treated as memoryless events. MJPs find wide application in genetics, %
for example, an MJP trajectory is sometimes used to represent a segmentation of a strand of genetic matter.
Here `time' represents position along the strand, with particular motifs occurring with different rates in different regions 
\citep{FearnSher2006}. %
MJPs are also widely used in finance, for example, \cite{Elliott06} use an MJP to model switches in the parameters that govern the dynamics of stock prices (the latter 
being modelled with a  L\'{e}vy\ process).

In the Bayesian setting, the challenge is to characterize the posterior distribution over MJP trajectories given noisy observations;
this typically cannot be performed analytically. Various sampling-based \citep{FearnSher2006, BoysWK08, El_hay_gibbssampling, FanShe08,Hobolth09} 
and deterministic \citep{Nodelman+al:UAI02,Nodelman+al:UAI05EP,Opper_varinf,Cohn_meanfield} approximations have been proposed in the literature,
but come with problems: they are often generic methods that do not exploit the structure of the MJP, and when they do, involve expensive computations like matrix exponentiation, 
matrix diagonalization or root-finding, or are biased, involving some form of time-discretization 
or independence assumptions. Moreover, these methods do not extend easily to more complicated likelihood functions which require specialized 
algorithms (for instance, the contribution of \cite{FearnSher2006} is to develop an exact sampler for Markov-modulated Poisson processes (MMPPs), where an MJP modulates 
the rate of a Poisson process).

In this work, an extension of \cite{RaoTeh2011a}, we describe a novel Markov chain Monte Carlo (MCMC) sampling algorithm for MJPs that avoids the need for the expensive 
computations described previously, and does not involve any form of approximation (i.e., our MCMC sampler converges to the true 
posterior). Importantly, our sampler is easily adapted to complicated extensions of MJPs such as MMPPs and continuous-time Bayesian networks (CTBNs) \citep{Nodelman+al:UAI02}, and is significantly 
more efficient than the
specialized samplers developed for these models. Like many existing methods, our sampler introduces auxiliary variables which simplify the structure of the MJP, 
using an idea called \emph{uniformization}. Importantly, unlike some existing methods which produce \emph{independent} posterior samples of these auxiliary 
variables,
our method samples these \emph{conditioned} on the current sample trajectory.  While the former approach depends on the observation process, and 
can be hard for complicated likelihood functions, ours results in a simple distribution over the auxiliary
variables that is independent of the observations. The observations are accounted for during a straightforward discrete-time forward-filtering backward-sampling 
step to resample a new trajectory.
The overall structure of our algorithm is that of an auxiliary variable Gibbs sampler, alternately resampling the auxiliary variables given the MJP trajectory, and
the trajectory given the auxiliary variables.

In Section \ref{sec:mjp} we briefly review Markov jump processes.  In Section \ref{sec:MJP_MCMC_UNIF} we introduce the idea of uniformization and describe our MCMC sampler for the simple
case of a discretely observed MJP.  In Section \ref{sec:mmpp}, we apply our sampler to the Markov-modulated Poisson process, while in Section \ref{ctbn}, we describe 
continuous-time Bayesian networks, and extend our algorithm to that setting.  In both sections, we report experiments comparing our algorithm to state-of-the-art sampling 
algorithms developed for these models. We end with a discussion in Section \ref{discussion}.

\section{Markov Jump Processes (MJPs)}\label{sec:mjp}
  
\begin{figure}[ht]
  \centering
\subfloat {
    \includegraphics[width=.486\textwidth]{.//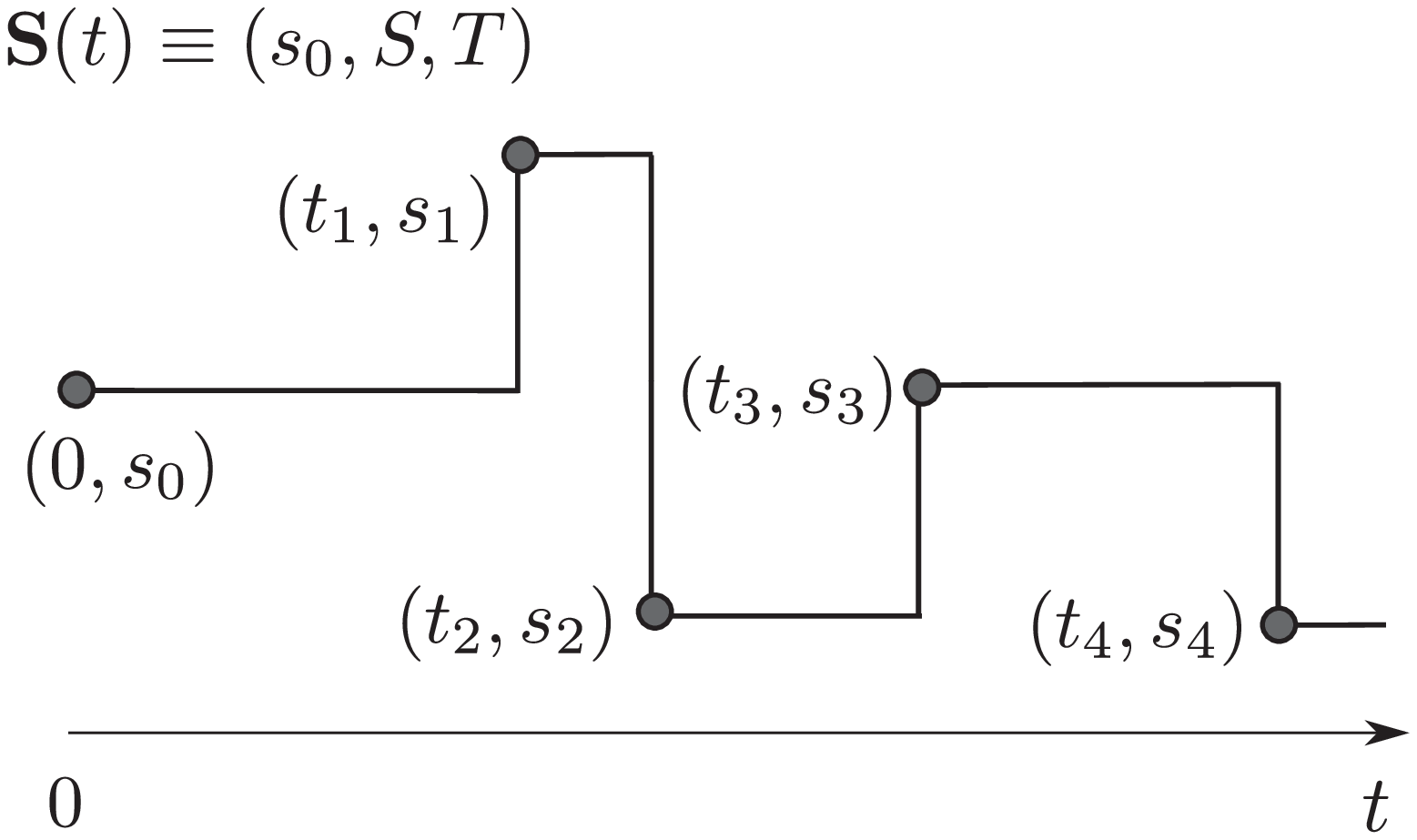} 
}
\subfloat {
    \includegraphics[width=.486\textwidth]{.//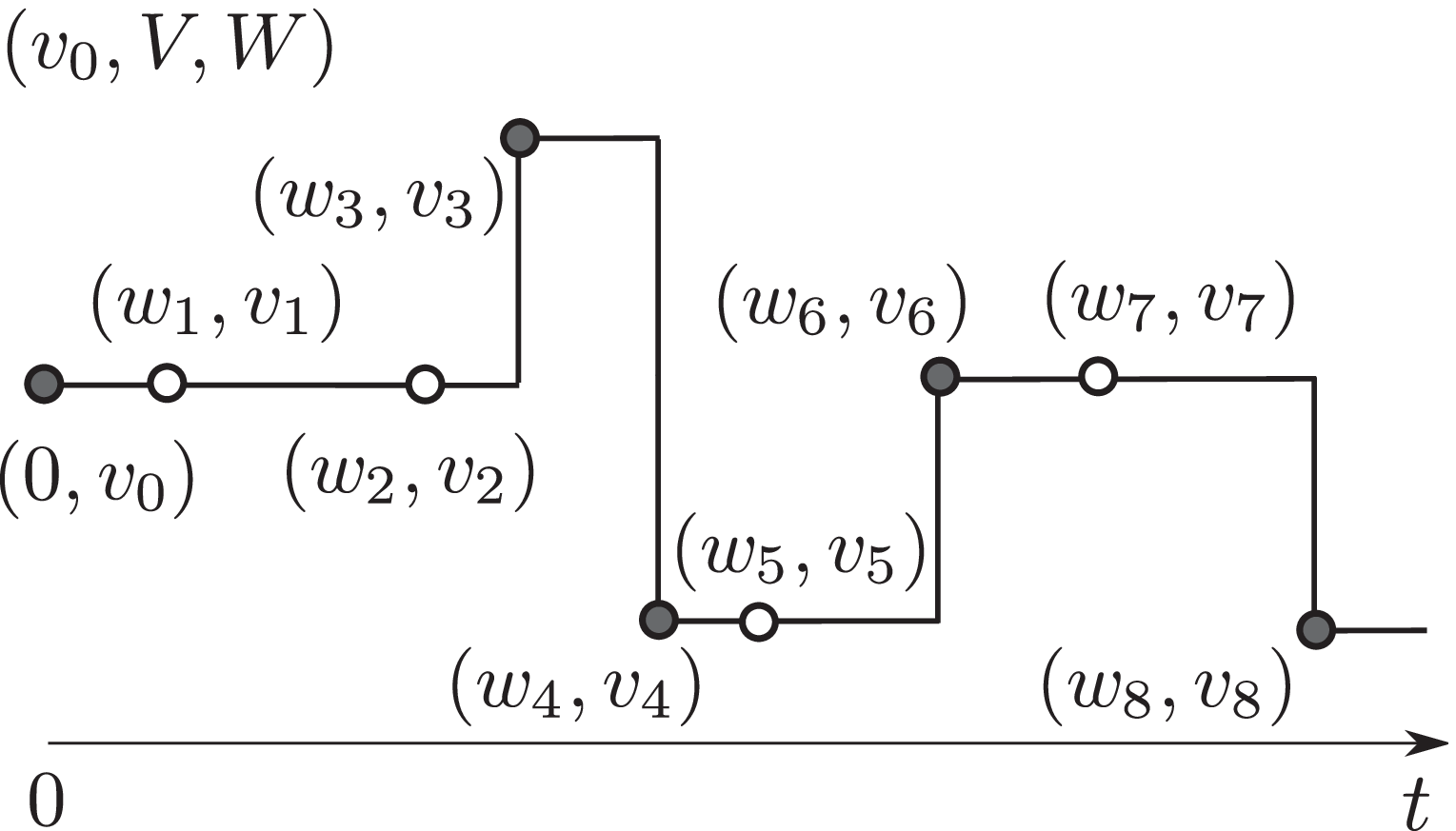} 
}
    \caption{(left) An MJP path $(s_0,S,T)$, (right) a uniformized representation $(v_0,V,W)$.}
    \label{fig:mjp_unif}
    \vspace*{-.5em}
  \end{figure}

  A Markov jump process $(\bS(t),\ t \in \mathbb{R_+})$ is a stochastic process with right-continuous, piecewise-constant paths \citep[see for example][]{Cinlar1975}. The paths 
themselves take values in some countable space $(\cS, \Sigma_{\cS})$, where $\Sigma_{\cS}$ is the discrete $\sigma$-algebra. As in typical applications, we assume $\cS$ is finite 
(say $\mathcal{S} = \{1,2,...N\}$). We also assume the 
process is homogeneous, implying (together with the Markov property) that for all times $t, t' \in \mathbb{R}_+$ and states $s,s' \in \mathcal{S}$,
\begin{align}
  p\left(\bS(t'+t) = s| \bS(t') = s', (\bS(u), u< t')\right) &=  [P_t]_{ss'}    \nonumber
\end{align}
for some stochastic matrix $P_t$ that depends only on $t$. The family of transition matrices $(P_t, \ t \ge 0)$ is defined by a matrix $A \in \mathbb{R}^{N \times N}$ called the 
\emph{rate matrix} or \emph{generator} of the MJP. 
$A$ is the time-derivative of $P_t$ at $t = 0$, with
\begin{align}
  P_t  &= \exp({At}) , \label{eq:matr_exp} \\
  p(\bS(t'+\dif t) = s| \bS(t') = s') &=  A_{ss'}\dif t  \hspace{.26in} (\textnormal{for } s \ne s') , \nonumber
\end{align}
where Equation \eqref{eq:matr_exp} is the matrix exponential. The off-diagonal elements of $A$ are nonnegative, and represent the rates of transiting from one state to 
another. Its diagonal entries are $A_s \equiv A_{ss} = -\sum_{s'\neq s} A_{s's}$ for each $s$, so that its columns sum to 0, with $-A_s = |A_s|$ characterizing
the total rate of leaving state $s$.

Consider a time interval $\cT \equiv [t_{start}, t_{end}]$, with the Borel $\sigma$-algebra $\Sigma_{\cT}$. Let $\pi_0$ be a density with respect to the counting
measure $\mu_{\cS}$ on $(\cS, \Sigma_{\cS})$; this defines the initial distribution over states at $t_{start}$. Then an MJP is described by the following 
generative process over paths on this interval, commonly called \emph{Gillespie's algorithm }\citep{gillespie1977exact}: 
\begin{algorithm}[H]
\caption{Gillespie's algorithm to sample an MJP path on the interval $[t_{start}, t_{end}]$}\label{alg:gillesp}
\begin{tabular}{p{1.4cm}p{12.2cm}}
\textbf{Input:}  & The rate matrix $A$ and the initial distribution over states $\pi_0$.\\
\textbf{Output:} & An MJP trajectory $\bS(t) \equiv (s_0, S,T)$.\\
\vspace{-.1in}
\line(1,0){342}
\end{tabular}
\begin{algorithmic}[1]
\State Assign the MJP a state $s_0 \sim \pi_0$. Set $t_0=t_{start}$ and $i=0$.
\Loop
\State Draw $z \sim \exp(|A_{s_i}|)$.
\State \textbf{If} {$t_i+z > t_{end}$} \textbf{then} \textbf{return} {$(s_0,\ldots,s_i,t_1,\ldots,t_i)$} and \textbf{stop}.
\State Increment $i$ and let $t_{i} = t_{i-1} + z$.
\State The MJP jumps to a new state $s_i=s$ at time $t_i$, for an $s\neq s_{i-1}$,  \\ \hspace*{3em} with probability proportional to $A_{ss_{i-1}}$.
\EndLoop
\end{algorithmic}
\end{algorithm}

If all event rates are finite, an MJP trajectory will almost surely have only a finite number of jumps. Let there be $n$ jumps, and let these occur at the ordered times 
$(t_1, \cdots, t_n)$. Define $T \equiv (t_1, \cdots, t_n)$, and let $S \equiv (s_1, \cdots, s_n)$ be the corresponding sequence of states, where $s_i = \bS(t_i)$. 
The triplet $(s_0,S,T)$ completely characterizes the MJP trajectory over $\cT$ (Figure
\ref{fig:mjp_unif} (left)).

From Gillespie's algorithm, we see that sampling an MJP trajectory involves sequentially sampling $n+1$ waiting times from exponential densities with
one of $N$ rates, and $n$ new states from one of $N$ discrete distributions,  each depending on the previous state. The $i$th waiting time equals $(t_i - t_{i-1})$ and is drawn from an exponential
with rate $|A_{s_{i-1}}|$, while the probability the $i$th state
equals $s_i$ is $A_{s_is_{i-1}}/|A_{s_{i-1}}|$. The last waiting time can take any value greater than $t_{end} - t_n$. Thus, under an MJP, 
a random element $(s_0, S,T)$ %
has density
\begin{align}
  p(s_0, S,T)  &=  \pi_0(s_0) \left(\prod_{i=1}^{n}  \textstyle|A_{s_{i-1}}| e^{-|A_{s_{i-1}}|(t_{i} - t_{i-1})} \frac{A_{s_is_{i-1}}}{|A_{s_{i-1}}|} \right) 
               \cdot e^{-|A_{s_{n}}|(t_{end} - t_{n})} \nonumber \\   
        &= \pi_0(s_0) \left(\prod_{i=1}^{{n}}  A_{s_is_{i-1}}\right)  \exp\left( -\int^{t_{end}}_{t_{start}} |A_{\bS(t)}| \dif t\right).  \label{eq:path_prob2} 
\end{align}
To be precise, we must state the base measure with respect to which the density above is defined. The reader might wish to skip these 
details \citep[and for more details, we recommend][]{DalVer2008a}.  Let $\mu_{\cT}$ be Lebesgue measure on $\cT$.
Recalling that the state space of the MJP is $\cS$, we can view $(S,T)$ as a sequence of elements in the product space 
$\M \equiv \cS \times \cT$. Let $\Sigma_{\M}$ and $\mu_{\M} = \mu_{\cS} \times \mu_{\cT}$ be the 
corresponding product $\sigma$-algebra and product measure. Define $\M^n$ as the $n$-fold product space with the usual
product $\sigma$-algebra $\Sigma^n_{\M}$ and product measure $\mu_{\M}^n$.  Now  let $\Mu \equiv \bigcup_{i=0}^{\infty} \M^i$ be a union space, elements of which
represent finite length pure-jump paths\footnote{Define $\M^0$ as a point satisfying $\M^0 \times \M = \M \times \M^0 = \M$ \citep{DalVer2008a}.}.
Let $\Sigma_{\M}^\cup$ be the corresponding union $\sigma$-algebra, where each measurable set $B \in \Sigma_{\M}^\cup$ can be expressed as  $B =\cup_{i=0}^\infty B^i$ with $B^i =B\cap \M^i \in \Sigma^i_{\M}$.
Assign this space the measure $\mu^{\cup}_{\M}$ defined as:
%
%
\begin{align}
  \mu^{\cup}_{\M}(B) &= \sum_{i=0}^{\infty} \mu^i_\M(B^i) . \nonumber
\vspace{-.1in}
\end{align}
Then, any element $(s_0, S,T) \in \cS \times \Mu$ sampled from Gillespie's algorithm has density w.r.t.\ $\mu_{\cS} \times \mu^{\cup}_{\M}$ given by Equation \eqref{eq:path_prob2}.

\section{MCMC Inference via Uniformization }  \label{sec:MJP_MCMC_UNIF}

  In this paper, we are concerned with the problem of sampling MJP paths over the interval $\cT \equiv [t_{start}, t_{end}]$ given noisy observations of the state of the MJP.  
In the simplest case, we observe the process at the boundaries $t_{start}$ and $t_{end}$. More generally, we are given the initial distribution 
over states $\pi_0$ as well as a set of $O$ noisy observations $X = \{X_{t^o_1}, ... X_{t^o_{O}}\}$ at times $T^o = \{t^o_1,\dots,t^o_O\}$ with likelihoods $p(X_{t^o_i}|\bS(t^o_i))$, and we wish to sample from the 
posterior $p(s_0, S,T|X)$. Here we have implicitly assumed that the observation times $T^o$ are fixed.  Sometimes the observation times themselves
can depend on the state of the MJP, resulting effectively in \emph{continuous-time} observations. This is the case for the Markov-modulated Poisson process and 
CTBNs. As we will show later, %
our method handles these cases quite naturally as well. %

A simple approach to inference is to discretize time and work with the resulting approximation. The time-discretized MJP corresponds to the familiar discrete-time Markov
chain, and its Markov structure can be exploited to construct dynamic programming algorithms like the forward-filtering backward-sampling (FFBS) algorithm 
(\cite{fru:StateSpace, Carter96}; see also Appendix \ref{sec:ffbs}) to sample posterior
trajectories efficiently. However, time-discretization introduces a bias into our inferences, as the system can change state only at a fixed set of times, and as the maximum
number of state changes is limited to a finite number. To control this bias, one needs to discretize time at a fine granularity, resulting in long Markov chains, and expensive computations.

Recently, there has been growing interest in constructing \emph{exact} MCMC samplers for MJPs without any approximations such as time-discretization. We review these in Section
\ref{sec:related}. One class of methods exploits the fact an MJP can be exactly represented by a discrete-time Markov chain on a \emph{random} time-discretization. Unlike 
discretization on a regular grid, a random grid can be quite coarse without introducing any bias. Given this discretization, we can use the FFBS algorithm to perform efficient sampling. However, we do not observe the random discretization, and thus also need to sample this 
from its posterior distribution. This posterior now depends on the likelihood process, and a number of algorithms attempt to solve this problem for specific
observation processes.
Our approach is to resample the discretization conditioned on the system trajectory. As we will see this is \emph{independent} of the likelihood process, resulting in a simple, flexible and 
efficient MCMC sampler.

\subsection{Uniformization}
  We first introduce the idea of \emph{uniformization} \citep{Jen1953, Cinlar1975, Hobolth09}, which forms the basis of our sampling algorithm. For an MJP with 
rate-matrix $A$, choose some $\Omega \ge \max_s{|A_s|}$. Let $W=(w_1,\ldots,w_{|W|})$ be an ordered set of times on the interval $[t_{start}, t_{end} ]$ drawn from a homogeneous 
Poisson process with intensity $\Omega$. %
$W$ constitutes a random discretization of the time-interval $[t_{start},t_{end}]$. 

Next, letting $I$ be the identity matrix, observe that $B = \left(I + \frac{1}{\Omega}A \right)$ is a stochastic matrix (it has nonnegative elements, and its columns sum
to one). 
Run a discrete-time Markov chain with initial distribution $\pi_0$ and transition matrix $B$ on the times in $W$; this is a Markov chain \emph{subordinated} 
to the Poisson process $W$.  The Markov chain will assign a set of states $(v_0,V)$; $v_0$ at the initial time $t_{start}$, and $V=(v_1,\ldots,v_{|V|})$ at the 
discretization times $W$ (so that $|V| = |W|$).  In particular, $v_0$ 
is drawn from $\pi_0$, while $v_i$ is 
drawn from the probability vector given by the ${v_{i-1}}${th} column of $B$. 
Just as $(s_0, S,T)$ characterizes an MJP path, $(v_0, V,W)$ also characterizes a sample path of some piecewise-constant, right-continuous stochastic process on $[t_{start},t_{end}]$. 
Observe that the matrix $B$ allows self-transitions, so that unlike $S$, $V$ can jump from a state back to the same state.  
We treat these as \emph{virtual} jumps, and regard $(v_0,V,W)$ as a redundant representation of a pure-jump process that always jumps to a new state
(see Figure \ref{fig:mjp_unif} (right)).
The virtual jumps provide a mechanism to `thin' the set $W$, thereby rejecting some of its events. This corrects for the fact that since the Poisson rate $\Omega$ dominates the leaving rates of all 
states of the MJP, $W$ will on average contain more events than there are jumps in the MJP path. 
 As the parameter 
$\Omega$ increases, the number of events in $W$ increases; at the same time the diagonal entries of $B$ start to dominate,
so that the number of self-transitions (thinned events) also increases.   The next proposition shows that these two effects exactly compensate each other, 
so that the process characterized by $(v_0,V,W)$ is precisely the desired MJP.

\begin{prop}{\citep{Jen1953}}
 For any $\Omega \ge \max_s{|A_s|}$, $(s_0,S,T)$ and $(v_0,V,W)$ define the same Markov jump process $\bS(t)$.\label{prop:unif}
\end{prop}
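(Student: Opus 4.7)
The plan is to show that the marginal distribution of the pure-jump process extracted from $(v_0,V,W)$ by deleting self-transitions has exactly the density in \eqref{eq:path_prob2} with respect to $\mu_\cS \times \muu_\M$. Since both processes are Markov and piecewise constant, equality of these densities will establish equality in distribution.

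First, I would fix a candidate trajectory $(s_0,S,T)$ with $n$ true jumps, and condition on it to describe the conditional distribution of $(v_0,V,W)$. The augmented representation consists of $T$ together with some collection of extra virtual times scattered in the $n+1$ inter-jump intervals $[t_i,t_{i+1})$ (with $t_0=t_{start}$, $t_{n+1}=t_{end}$). In the $i$th interval, of length $\ell_i = t_{i+1}-t_i$, all extra events must be self-transitions of the subordinated chain from $s_i$ to $s_i$. Writing the joint density of $(v_0,V,W)$ against Lebesgue measure on ordered virtual times combines three ingredients: the Poisson density $e^{-\Omega(t_{end}-t_{start})}\Omega^{|W|}$ for $W$, the initial probability $\pi_0(s_0)$, and the subordinated chain contribution, namely $A_{s_i s_{i-1}}/\Omega$ at each true jump and $B_{s_i s_i} = 1 - |A_{s_i}|/\Omega$ at each virtual jump in the $i$th interval.

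Next I would marginalize out the virtual jumps. For each interval, summing over the number $k_i$ of virtual jumps and integrating their ordered positions uniformly over $[t_i,t_{i+1})$ gives
\begin{align}
\sum_{k_i=0}^{\infty} \l(1-\tfrac{|A_{s_i}|}{\Omega}\r)^{k_i} \Omega^{k_i} \frac{\ell_i^{k_i}}{k_i!} = \exp\l((\Omega-|A_{s_i}|)\ell_i\r). \nonumber
\end{align}
Combining the $n+1$ intervals, the $\Omega^n$ factors from the true jumps cancel the $1/\Omega^n$ factors from the $B$-matrix entries, the $e^{\Omega \ell_i}$ factors merge with the Poisson normalizer $e^{-\Omega(t_{end}-t_{start})}$ and cancel (since $\sum_i \ell_i = t_{end}-t_{start}$), and what remains is $\pi_0(s_0)\prod_{i=1}^n A_{s_i s_{i-1}} \exp\l(-\sum_i |A_{s_i}|\ell_i\r)$, which is precisely \eqref{eq:path_prob2}.

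The main obstacle is bookkeeping rather than substance: one has to be careful that the underlying base measure for $(v_0,V,W)$ (built on $\cS\times\Mu$ as in Section \ref{sec:mjp} but with the discrete chain replacing Gillespie's construction) pushes forward correctly onto $\cS\times\Mu$ under the deletion of self-transitions, so that the series-and-integral computation above is genuinely the pushforward density. Once that is set up, the telescoping cancellation of $\Omega$ is a one-line check, and the exponent assembles into $\int_{t_{start}}^{t_{end}} |A_{\bS(t)}|\,\dif t$ because $\bS(t)=s_i$ on $[t_i,t_{i+1})$. The assumption $\Omega \ge \max_s |A_s|$ is used exactly to keep $1-|A_s|/\Omega \in [0,1]$ so that $B$ is stochastic and the geometric-type sums above are well defined.
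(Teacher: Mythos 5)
Your proof is correct, but it is not the route the paper takes: the paper follows \cite{Hobolth09}, expanding $\exp(At)\pi_0 = e^{-\Omega t}\exp(\Omega t B)\pi_0$ into a Poisson-weighted sum of $B^n\pi_0$ to match the transition kernels of the two constructions, and then invoking Kolmogorov's extension theorem to conclude they are versions of the same process. What you propose is precisely the alternative that the paper explicitly flags immediately after the proof as ``more direct but cumbersome'' and declines to carry out: write the density of $(v_0,V,W)$ on $\cS\times\Mu$ (the paper's Equation \eqref{eq:p_v_w}) and marginalize the self-transitions interval by interval to recover Equation \eqref{eq:path_prob2}. Your calculation is sound --- the per-event factor $\Omega(1-|A_{s_i}|/\Omega)=\Omega+A_{s_i}$ sums against $\ell_i^{k_i}/k_i!$ to give $\exp((\Omega+A_{s_i})\ell_i)$, and the cancellations you describe are exactly right --- and it is essentially the computation in the paper's Proposition 2 run in reverse (there the auxiliary jumps are \emph{added} to $(s_0,S,T)$ and the product of Equations \eqref{eq:path_prob2} and \eqref{eq:aux} is matched to \eqref{eq:p_v_w}). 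The trade-off: your route proves equality of laws directly on path space in one computation and reuses machinery needed later anyway, at the price of the pushforward/base-measure bookkeeping on the union space that you correctly identify as the main burden; the paper's route is shorter and avoids that bookkeeping, but only verifies finite-dimensional distributions and so needs the extension-theorem step to finish. One small over-qualification on your side: once you have equality of the densities of $(s_0,S,T)$ with respect to $\mu_{\cS}\times\muu_\M$, the laws of the paths on $[t_{start},t_{end}]$ coincide outright --- you do not additionally need to appeal to the Markov or piecewise-constant structure at that point.
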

\begin{proof}
We follow \cite{Hobolth09}. From Equation \eqref{eq:matr_exp}, 
the marginal distribution of the MJP at time $t$ is given by
\begin{align}
  \pi_t &= \exp(At) \pi_0 \nonumber \\
        &= \exp\l( \Omega{(B - I)t}\r) \pi_0 \nonumber \\
        &= \exp\l( -\Omega t\r) \exp\l( \Omega t{B}\r) \pi_0 \nonumber \\
        &=  \sum_{n=0}^{\infty} \l( \l( \exp\l( -\Omega t\r) \frac{\l( \Omega t\r)^n}{n!}\r) \l( B^n \pi_0\r) \nonumber \r) .
\end{align}

The first term in the summation is the probability that a rate $\Omega$ Poisson produces $n$ events in an interval of length $t$, i.e., that $|W| = n$. The 
second term gives the marginal distribution over states for a 
discrete-time Markov chain after $n$ steps, given that the initial state is drawn from $\pi_0$, 
and subsequent states are assigned according to a transition matrix $B$. Summing over $n$, we obtain the marginal distribution over states at time $t$.
Since the transition kernels induced by the uniformization procedure agree with those of the Markov jump process ($\exp(At)$) for all $t$, and since the two 
processes also share the same initial distribution of states, $\pi_0$, all finite dimensional distributions agree. Following Kolmogorov's extension 
theorem \citep{kallenberg_FMP}, both define versions of the same stochastic process.

\end{proof}

A more direct but cumbersome approach is note that $(v_0,V,W)$ is also an element of the space $\cS \times \M^{\cup}$. 
We can then write down its density $p(v_0,V,W)$ w.r.t.\ $\mu_{\cS} \times \mu^{\cup}$, and 
show that marginalizing out the number and locations of self-transitions recovers Equation \eqref{eq:path_prob2}.
While we do not do this, we will derive the density $p(v_0,V,W)$ for later use.  As in Section \ref{sec:mjp}, let
$\cT^{\cup}$ and $\cS^{\cup}$ denote the measure spaces consisting of finite sequences of times and states respectively, and let $\muu_{\cT}$ and $\muu_{\cS}$ be the corresponding base measures.
The Poisson realization $W$ is determined by waiting times sampled from a rate $\Omega$ exponential distribution,
so that following Equation \eqref{eq:path_prob2}, $W$ has density w.r.t.\ $\mu^{\cup}_{\cT}$ given by
\begin{align}
  p(W) =  {\Omega^{|W|}}{e^{-\Omega(t_{end}-t_{start})}} . \label{eq:poiss_dens}
\end{align}

Similarly, from the construction of the Markov chain, it follows that the state assignment $(v_0, V)$ has probability density w.r.t.\ 
$\mu_{\cS} \times \mu^{\cup}_{\cS}$ given by
\begin{align}
  p(v_0, V|W) %
         & = \pi_0(v_0) \prod_{i=1}^{|V|} \left(1+ \frac{A_{v_i}}{\Omega}\right)^{\ind(v_i = v_{i-1})}  \nonumber
             \left(\frac{A_{v_{i} v_{i-1}}}{\Omega}\right)^{\ind(v_i \neq v_{i-1})} . \label{eq:mc_dens}
\end{align}
Since under uniformization $|V| = |W|$, it follows that %
\begin{align}
\muu_{\cS}(\dif V) \times \muu_{\cT}(\dif W) &=   \mu_{\cS}^{|V|}(\dif V) \times \mu^{|W|}_{\cT}(\dif W) \nonumber \\
  & = (\mu_{\cT} \times \mu_{\cS})^{|V|}(\dif(V,W)) \nonumber \\
  &= \muu_\M(\dif(V,W)).
\end{align}
Thus, from Equations \eqref{eq:poiss_dens} and \eqref{eq:mc_dens}, $(v_0, V,W)$ has density w.r.t.\ $\mu_{\cS} \times \muu_\M$ given by
\begin{align}
 \hspace{-.22in} p(v_0,V,W)={e^{-\Omega(t_{end}-t_{start})}}
       \pi_0(v_0)  \prod_{i=1}^{|V|} \left(\Omega + {A_{v_i}}\right)^{\ind(v_i = v_{i-1})} \left({A_{v_{i} v_{i-1}}}\right)^{\ind(v_i \neq v_{i-1})} . \label{eq:p_v_w}
\end{align}

\subsection{The MCMC Algorithm}  \label{sec:MJP_MCMC}
We adapt the uniformization scheme described above to construct an auxiliary variable Gibbs sampler.  Recall that the only difference between $(s_0,S,T)$ and $(v_0,V,W)$ is the presence of an 
auxiliary set of virtual jumps in the latter. Call the virtual jump times $U_{\cT}$; associated with $U_{\cT}$ is a sequence of states $U_{\cS}$.
$U_{\cS}$ is  uniquely determined by $(s_0,S,T)$ and $U_{\cT}$ (see Figure \ref{fig:mjp_unif}(right)), and we say this configuration is
\emph{compatible}. Let $U = (U_{\cS}, U_{\cT})$, and observe that for compatible values of $U_\cS$, $(s_0, S, T, U)$ and $(v_0, V,W)$
represent the same point in ${\cS} \times \M^{\cup}$.

  \begin{figure}[ht]
  \begin{minipage}[b]{0.32\linewidth}
  \centering
    \includegraphics[width=0.9\textwidth]{.//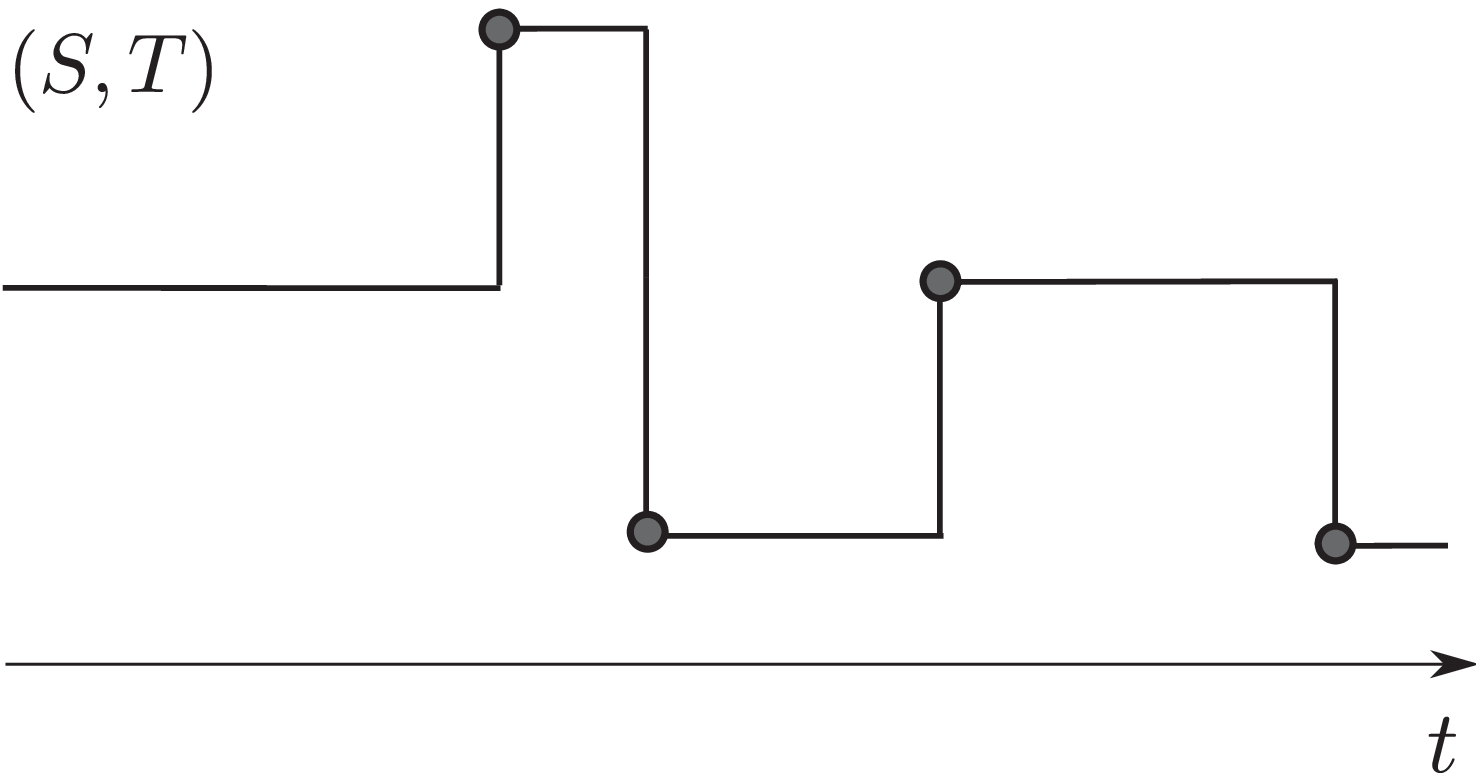} 
  \end{minipage}
  \begin{minipage}[b]{0.32\linewidth}
  \centering
    \includegraphics[width=0.9\textwidth]{.//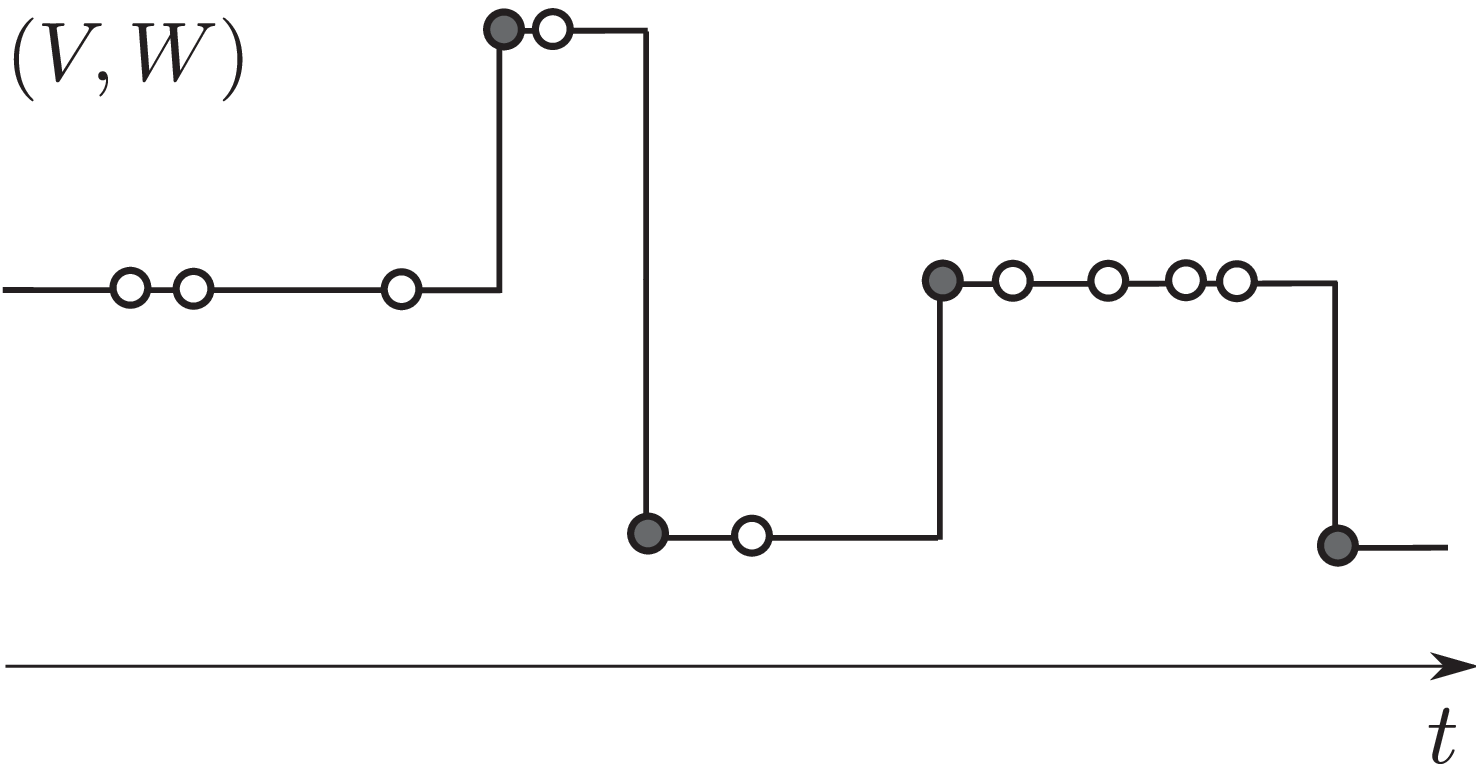} 
  \end{minipage}
  \begin{minipage}[b]{0.32\linewidth}
  \centering
    \includegraphics[width=0.9\textwidth]{.//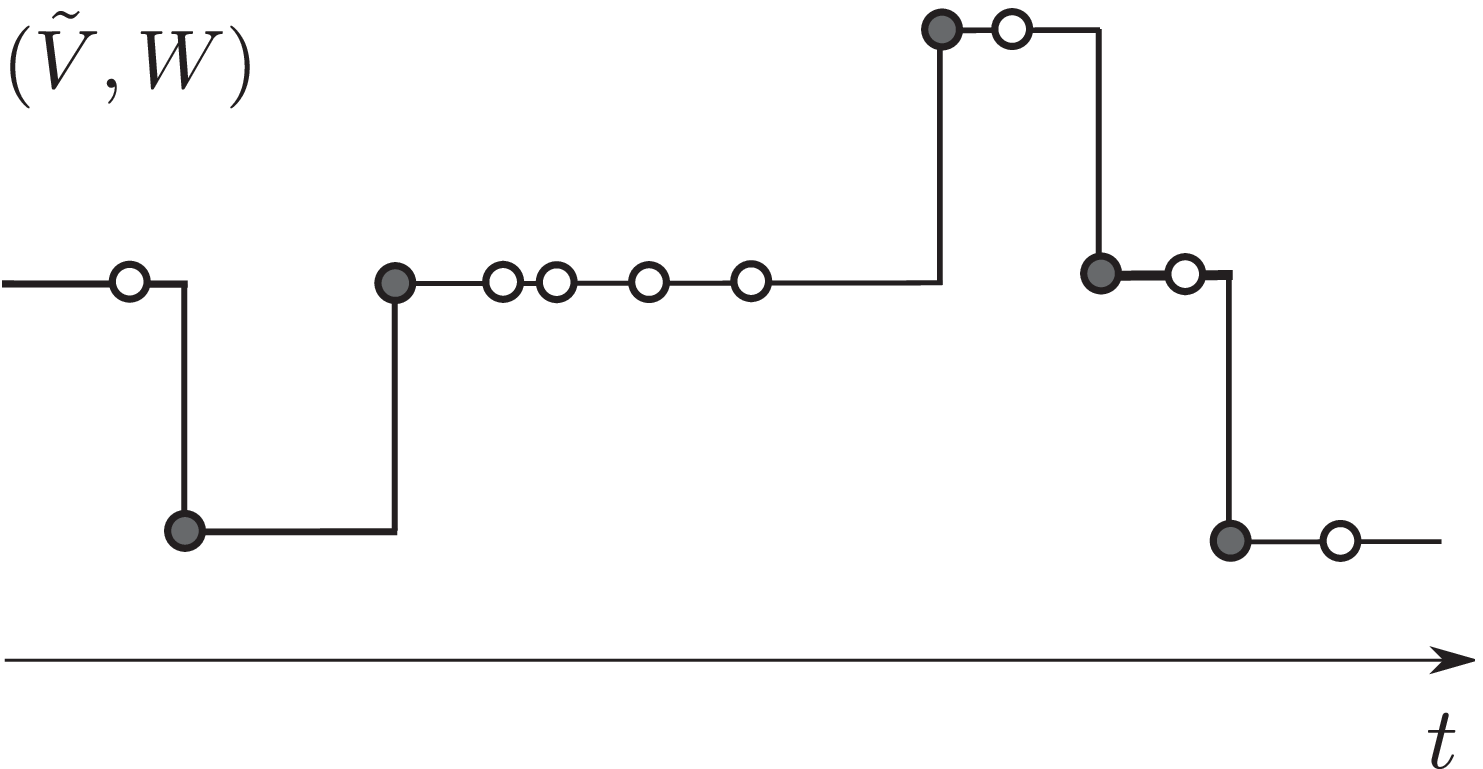} 
  \end{minipage}
    \caption{Uniformization-based Gibbs sampler: starting with an MJP trajectory (left), resample the thinned events (middle) and then resample the trajectory given all Poisson events (right). Discard the thinned events and repeat.}
  \label{fig:mcmc_fig}
  \end{figure}

Given an MJP trajectory $(s_0,S,T)$ (Figure \ref{fig:mcmc_fig} (left)), we proceed by first sampling the set of virtual jumps $U_\cT$ given $(s_0,S,T)$,
 as a result recovering the uniformized characterization $(s_0,V,W)$ (Figure \ref{fig:mcmc_fig} (middle)). 
 This corresponds to a random discretization of $[t_{start}, t_{end}]$ at times $W$. %
We now discard the state sequence $V$, and perform a simple HMM forward-filtering backward-sampling step to resample a new state sequence $\tilde{V}$.
Finally, dropping the virtual jumps in $(s_0,\tilde{V},W)$ gives a new MJP path $(s_0,\tilde{S},\tilde{T})$.
Figure \ref{fig:mcmc_fig} describes an iteration of the MCMC algorithm.

The next proposition shows that conditioned on $(s_0,S,T)$, the virtual jump times $U_{\cT}$ are distributed as an \emph{inhomogeneous} Poisson process with 
intensity $R(t) =  \Omega + A_{\bS(t)}$ (we remind the reader that $A$ has a negative diagonal, so that $R(t) \le \Omega$). This intensity is 
piecewise-constant, taking the value $r_{i} = \Omega + A_{s_{i}}$ on the interval $[t_{i},t_{i+1})$ (with $t_0=t_{start}$ and $t_{n+1}=t_{end}$), so it is easy to sample $U_{\cT}$ and thus $U$.  %

\begin{prop}
For any $\Omega \ge \max_s{(|A_s|)}$, both $(s_0,S,T,U)$ and $(v_0,V,W)$ have the same density w.r.t.\ $\mu_{\cS} \times \muu_\M$.  
In other words, the Markov jump process $(s_0,S,T)$ along with virtual jumps $U$ drawn from the inhomogeneous Poisson process as above is equivalent to the times $W$ being drawn from a Poisson process with rate $\Omega$, followed by the states $(v_0,V)$ being drawn from the subordinated Markov chain.  
\end{prop}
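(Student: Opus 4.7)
The plan is to compute the joint density of $(s_0, S, T, U)$ with respect to $\mu_\cS \times \muu_\M$ directly and show that it matches the expression for $p(v_0, V, W)$ already derived in Equation \eqref{eq:p_v_w}. Since the compatibility condition uniquely determines $U_\cS$ from $(s_0, S, T, U_\cT)$, the density is really over $(s_0, S, T, U_\cT)$, and I would factor it as $p(s_0, S, T)\, p(U_\cT \mid s_0, S, T)$. The first factor is given by Equation \eqref{eq:path_prob2}. For the second factor, I would use the statement in the proposition itself, namely that $U_\cT$ is sampled from a piecewise-constant inhomogeneous Poisson process with intensity $R(t) = \Omega + A_{\bS(t)}$, whose density with respect to $\muu_\cT$ on $[t_{start},t_{end}]$ is the standard Poisson likelihood
\begin{align}
p(U_\cT \mid s_0, S, T) &= \left(\prod_{u \in U_\cT} R(u)\right) \exp\!\left(-\int_{t_{start}}^{t_{end}} R(t)\,\dif t\right). \nonumber
\end{align}

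The key step is then a direct algebraic comparison. Substituting $R(t) = \Omega + A_{\bS(t)} = \Omega - |A_{\bS(t)}|$ splits the exponential into $e^{-\Omega(t_{end}-t_{start})}$ times $\exp(\int |A_{\bS(t)}|\,\dif t)$, and the latter cancels exactly against the $\exp(-\int |A_{\bS(t)}|\,\dif t)$ factor in Equation \eqref{eq:path_prob2}. What remains is
\begin{align}
\pi_0(s_0)\, e^{-\Omega(t_{end}-t_{start})} \left(\prod_{i=1}^{n} A_{s_i s_{i-1}}\right) \left(\prod_{u \in U_\cT} \bigl(\Omega + A_{\bS(u)}\bigr)\right). \nonumber
\end{align}
Now I would invoke the compatibility map: the merged ordered set $W = T \cup U_\cT$ has length $|V|$, and at each $w_i \in W$ either (i) $w_i \in T$, in which case $v_i \neq v_{i-1}$ and $A_{v_i v_{i-1}}$ equals the corresponding true-jump factor $A_{s_j s_{j-1}}$, or (ii) $w_i \in U_\cT$, in which case $v_i = v_{i-1} = \bS(w_i)$ and $\Omega + A_{v_i}$ equals the virtual-jump factor $\Omega + A_{\bS(u)}$. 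Thus the two products above combine into the single product $\prod_{i=1}^{|V|}(\Omega + A_{v_i})^{\ind(v_i=v_{i-1})}(A_{v_i v_{i-1}})^{\ind(v_i \neq v_{i-1})}$, which is exactly Equation \eqref{eq:p_v_w}.

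The only subtlety worth checking carefully, and the main potential pitfall, is the base measure bookkeeping: one must verify that the inhomogeneous Poisson density is indeed taken with respect to $\muu_\cT$, and that the product measure identification $\muu_\cS(\dif V) \times \muu_\cT(\dif W) = \muu_\M(\dif(V,W))$ (already established in the excerpt before Equation \eqref{eq:p_v_w}) also applies to the merged set $(T \cup U_\cT)$ with its compatible state sequence. Once this is in place, the algebraic cancellation is routine and the equivalence of the two stochastic representations follows as a corollary by marginalizing out the virtual jumps, recovering Proposition \ref{prop:unif} in a second, more explicit, way.
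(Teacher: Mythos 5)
Your proposal is correct and follows essentially the same route as the paper: both compute the joint density of $(s_0,S,T,U)$ by multiplying Equation \eqref{eq:path_prob2} with the piecewise-constant inhomogeneous Poisson density of $U_\cT$ (your product over $u \in U_\cT$ is just the paper's $\prod_i (\Omega+A_{s_i})^{|U_i|}$ written point-by-point), cancel the exponential of $\int |A_{\bS(t)}|\,\dif t$, and match the remaining factors term-by-term against Equation \eqref{eq:p_v_w} via the compatibility of $U_\cS$, with the same base-measure identification $\muu_\M(\dif(S,T)) \times \muu_\M(\dif U) = \muu_\M(\dif(S,T,U))$. The only point the paper makes explicitly that you leave implicit is that the density is zero at incompatible settings of $U_\cS$, which is needed to conclude the two densities agree everywhere rather than only on the compatible set.
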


\begin{proof}
Let $n=|T|$ be the number of jumps in the current MJP trajectory.  Define $|U_{i}|$ as the number of auxiliary times in interval $(t_{i},t_{i+1})$.
Then, $|U_\cT| = \sum_{i=0}^{n} |U_i|$.  If $U_\cT$ is sampled from a piecewise-constant inhomogeneous Poisson process, its density is the product of the 
densities of a sequence of homogeneous Poisson processes, and from Equation \eqref{eq:poiss_dens} is 
\begin{align}
  p(U_\cT|s_0,S,T)  %
            &=  \left(\prod_{i=0}^{n} {(\Omega + A_{s_i})^{|U_i|}}\right) \exp\left(-\int^{t_{end}}_{t_{start}}(\Omega + A_{\bS(t)})dt\right) \hspace*{-.5em}\label{eq:aux}
\end{align}
w.r.t.\  $\muu_\cT$.  Having realized the times $U_\cT$, the associated states $U_\cS$ are determined too (elements of $U_\cS$ in the interval $(t_{i-1},t_i)$ equal 
$s_{i-1}$). Thus $U=(U_\cS,U_\cT)$ given $(s_0,S,T)$ has the same density as Equation \eqref{eq:aux}, but now w.r.t.\ $\muu_\M$, {and now restricted to
elements of $\M^{\cup}$ where $U_\cS$ is compatible with $(S,T,U_\cT)$}. 
Multiplying Equations \eqref{eq:path_prob2} and \eqref{eq:aux}, we see that $(s_0, S, T, U)$ has density
\begin{align}
  p(s_0,S,T,U)
  &= e^{-\Omega(t_{end}-t_{start})} 
                      \pi_0(s_0)\prod_{i=0}^n {\left(\Omega+A_{s_i}\right)^{|U_i|}} \prod_{i=1}^n A_{s_{i} s_{i-1}} \nonumber
\end{align}
w.r.t.\ $\mu_{\cS} \times \muu_\M \times \muu_\M$.
However, by definition, 
\begin{align}
\muu_\M(\dif(S,T)) \times \muu_\M(\dif U) & =\mu^{|T|}_\M(\dif(S,T)) \times \mu^{|U|}_\M(\dif U)\nonumber\\
  & = \mu^{|T|+|U|}_\M(\dif(S,T,U))\nonumber\\
  & = \muu_\M(\dif(S,T,U)) . \nonumber
\end{align}
Comparing with Equation \eqref{eq:p_v_w}, and noting that $|U_i|$ is the number of self-transitions in interval $(t_{i-1},t_i)$, we see both are
equal whenever $U_\cS$ is compatible with $(s_0,S,T,U_\cT)$.
The probability density at any incompatible setting of $U_\cS$ is zero, giving us the desired result.
\end{proof}

We can now incorporate the likelihoods coming from the observations $X$.  Firstly, note that by assumption, $X$ depends only on the MJP trajectory $(s_0,S,T)$ and not on the auxiliary jumps $U$. Thus, the conditional distribution of $U_\cT$ given $(s_0,S,T,X)$ is still the inhomogeneous Poisson process given above.  Let $X_{[w_i,w_{i+1})}$ represent the observations in the interval 
$[w_i,w_{i+1})$ (taking $w_{|W|+1} = t_{end}$). Throughout this interval, the MJP is in state $v_i$, giving a likelihood term:
\begin{align}
  L_i(v_i) &= p(X_{[w_i, w_{i+1})}|\bS(t)=v_i \text{ for } t \in [w_i,w_{i+1})) .  \label{op_lik_gen} \\
\intertext{For the case of noisy observations of the MJP state at a discrete set of times $T^o$, this simplifies to}
  L_i(v_i) &= \prod_{j:t^o_j\in[w_i,w_{i+1})} p(X_{t^o_j}|\bS(t^o_j)=v_i) .  \nonumber
\end{align}

Conditioned on the times $W$, $(s_0,V)$ is a Markov chain with initial distribution $\pi_0$, transition matrix $B$ and likelihoods given by Equation 
\eqref{op_lik_gen}. We can efficiently resample $(s_0,V)$ using the standard forward filtering-backward sampling (FFBS) algorithm. We provide a description of this algorithm
in Appendix \ref{sec:ffbs}.
This cost of such a resampling step is $O(N^2|V|)$, quadratic in the number of states and linear in the length of the chain.  Further, any structure in $A$ (e.g., sparsity) is inherited by $B$ and can be exploited easily.
Let $(\tilde{s}_0,\tilde{V})$ be the new state sequence.  Then $(\tilde{s}_0,\tilde{V},W)$ will correspond to a new MJP path $\tilde{\bS}(t) \equiv (\tilde{s}_0,\tilde{S},\tilde{T})$,
obtained by discarding virtual jumps from $(\tilde{V},W)$. Effectively, given an MJP path, an iteration of our algorithm corresponds to introducing thinned
events, relabelling the thinned and actual transitions using FFBS, and then discarding the new thinned events to obtain a new MJP.
We summarize this in Algorithm \ref{alg:mcmc_smplr}.

\begin{algorithm}[H]
\caption{Blocked Gibbs sampler for an MJP on the interval $[t_{start},t_{end}]$}\label{alg:mcmc_smplr}
\label{alg:blk_gibbs}
\begin{tabular}{p{1.4cm}p{10.5cm}}
\textbf{Input:}  & A set of observations $X$, and parameters $A$ (the rate matrix),
                 $\pi_0$ (the initial distribution over states) and $\Omega > \max_s(|A_s|)$.\\
               &The previous MJP path, $\bS(t) \equiv (s_0, S, T)$.\\%

\textbf{Output:} &A new MJP trajectory $\tbS(t) \equiv (\tilde{s}_0, \tilde{S},\tilde{T})$.\\
\hline 
\end{tabular}
\begin{algorithmic}[1]
\State Sample $U_\cT \subset [t_{start},t_{end}]$ from a Poisson process with piecewise-constant rate $$R(t)=(\Omega + A_{\bS(t)}).$$ Define $W = T \cup U_\cT$ (in increasing order).
\State Sample a path $(\tilde{s}_0,\tilde{V})$ from a discrete-time Markov chain with 
       $1 + |W|$ steps using the FFBS algorithm. The transition matrix of the Markov chain is 
$B = \left(I + \frac{A}{\Omega}\right)$ while the initial distribution over states is $\pi_0$.
     The likelihood of state $s$ at step $i$ is $$L_i(s) = p(X_{[w_i,w_{i+1})} | \bS(t) = s \text{ for } t\in[w_i,w_{i+1})).$$
\State Let $\tilde{T}$ be the set of times in $W$ when the Markov chain changes state. 
       Define $\tilde{S}$ as the corresponding set of state values. 
\textbf{Return} $(\tilde{s}_0,\tilde{S},\tilde{T})$.
\end{algorithmic}
\end{algorithm}

\begin{prop}  \label{prop:ergdcty}
The auxiliary variable Gibbs sampler described above has the posterior distribution $p(s_0,S,T|X)$ as its stationary distribution. 
Moreover, if $\Omega > \max_s|A_s|$, the resulting Markov chain is irreducible.
\end{prop}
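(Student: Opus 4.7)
The plan is to handle stationarity and irreducibility separately, exploiting the auxiliary variable interpretation developed in the previous propositions.

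For stationarity, I would view the algorithm as a two-block Gibbs sampler on the augmented state space consisting of MJP paths together with the Poisson grid, i.e., tuples $(s_0, S, T, W)$ with $T \subseteq W$, under the joint distribution whose density w.r.t.\ $\mu_{\cS} \times \muu_\M \times \muu_\M$ is the product of Equations \eqref{eq:path_prob2}, \eqref{eq:aux}, and the likelihood $p(X \mid s_0, S, T)$. Proposition 3 (equivalence of $(s_0,S,T,U)$ with $(v_0,V,W)$) guarantees that the $(s_0,S,T)$-marginal of this augmented posterior is exactly $p(s_0, S, T \mid X)$. Step 1 of Algorithm \ref{alg:mcmc_smplr} samples $W \mid (s_0,S,T,X)$, which by Proposition 3 and the fact that $X$ is conditionally independent of $U_\cT$ given the MJP path is precisely the required Gibbs conditional; Step 2 samples $(\tilde s_0, \tilde V) \mid W, X$ from the HMM posterior, which after mapping back through the $(v_0,V,W) \leftrightarrow (s_0,S,T,U)$ equivalence is the Gibbs conditional for $(s_0,S,T) \mid W, X$. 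Standard Gibbs theory then gives invariance of the augmented posterior, hence invariance of its marginal under the induced chain on MJP paths.

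For irreducibility, I would argue by showing that from any path $(s_0, S, T)$ in the support of the posterior, one iteration of the sampler places positive probability (density, in the appropriate sense on $\cS \times \Mu$) on every path $(s_0', S', T')$ with $p(s_0', S', T' \mid X) > 0$. The strict inequality $\Omega > \max_s |A_s|$ is what makes this work: it ensures that the piecewise-constant rate $R(t) = \Omega + A_{\bS(t)}$ is bounded below by $\Omega - \max_s |A_s| > 0$ on the entire interval, so the inhomogeneous Poisson step in Step 1 has positive density at every finite configuration of points, in particular at configurations whose union with $T$ contains $T'$; and it ensures that the diagonal entries of $B = I + A/\Omega$ are strictly positive, so that in the FFBS step, assigning the target state sequence $(\tilde s_0, \tilde V)$ that realizes $(s_0', S', T')$ on the augmented grid (with self-transitions on the remaining virtual sites) has strictly positive probability, provided the corresponding path has positive likelihood under $X$ — which follows because $(s_0', S', T')$ is in the posterior support.

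The main obstacle is the second part. Stationarity is essentially bookkeeping on top of Propositions 1 and 3, and the only subtlety is ensuring the change-of-variables between $(s_0,S,T,U_\cT)$ and $(v_0,V,W)$ that was already established is used consistently. Irreducibility, by contrast, requires a genuine argument in the continuous-state Markov chain setting: one must identify a reference measure (the natural choice is the posterior itself, restricted to its support, viewed on $\cS \times \Mu$) and verify that the one-step transition kernel has a strictly positive density with respect to that measure on the support. The combinatorial part — matching $T'$ using virtual jumps and filling in the states via positive entries of $B$ — is clean, but making precise the statement that the Poisson process assigns positive density at the desired configurations, and then disintegrating the FFBS conditional so the matching event has positive probability uniformly on a neighborhood, is where care is needed. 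I expect the authors to do this informally, emphasizing the two positivity facts (Poisson rate bounded away from zero; strictly positive diagonal of $B$) rather than carrying out the full measure-theoretic verification.
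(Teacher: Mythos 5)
Your proposal is correct and follows essentially the same route as the paper: stationarity as a consequence of the auxiliary-variable Gibbs structure built on the earlier equivalence propositions, and irreducibility from the strict positivity of the thinning intensity $\Omega + A_{\bS(t)}$ (together with the resulting positivity of the relevant FFBS assignments). The paper's own proof is a two-sentence informal version of exactly this argument, so your anticipated ``the authors do this informally'' is accurate; your added measure-theoretic care is a refinement, not a departure.
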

\begin{proof}
  The first statement follows since the algorithm simply introduces auxiliary variables $U$, and then conditionally samples $V$ given $X$ and $W$.  
For the second, note that if $\Omega > \max_s(|A_s|)$, then the intensity of the subordinating Poisson process is strictly positive. 
Thus, there is positive probability density of sampling appropriate auxiliary jump times $U$ and moving from any MJP trajectory to any other.
\end{proof}

Note that it is essential for $\Omega$ to be strictly greater than $\max_s|A_s|$; equality is not sufficient for irreducibility.  For example, if all diagonal elements of $A$ are equal to $\Omega$, then the 
Poisson process for $U_\cT$ will have intensity 0, so that the set of jump times $T$ will never increase.  

Since FFBS returns a new state sequence $\tilde{V}$ that is independent of $V$ given $W$, the 
only dependence between successive MCMC samples %
arises because the new candidate jump times include the old jump times i.e.,  $T \subset W$.
This means that the new MJP trajectory has non-zero probability of making a jump at a same time as the old trajectory.
Increasing $\Omega$ introduces more virtual jumps, %
and as $T$ becomes a smaller subset of $W$, we get faster mixing. 
Of course, increasing $\Omega$ makes the HMM chain grow longer, leading to a linear increase in the computational cost per 
 iteration.  Thus the parameter $\Omega$ allows a trade-off between mixing rate and computational cost. We will study this trade-off in Section \ref{mjp:expt}. 
In all other experiments, we set $\Omega =  \max_s({2|A_s|})$ as we find this works quite well, with the samplers typically converging after fewer than 5 iterations.

\subsection{Previous Posterior Sampling Schemes} \label{sec:related}
A simple approach when the MJP state is observed at the ends of an interval is rejection sampling: sample paths given the observed start state via Gillespie's algorithm,
 and reject those that do not end
in the observed end  state \citep{Niels2002}. We can extend this to noisy observations by importance sampling or particle filtering \citep{FanShe08}. Recently,
\cite{GolWilk11} have applied particle MCMC methods to correct the bias introduced by standard particle filtering methods. However, these methods are efficient only in situations
where the data exerts a relatively weak influence on the unobserved trajectory (compared to the prior): a large state-space or an unlikely end state can result in a large number of 
rejections or small effective sample sizes  \citep{Hobolth09}.

A second approach, more specific to MJPs, integrates out the infinitely many paths of the MJP in between observations using matrix exponentiation 
(Equation \eqref{eq:matr_exp}), and uses forward-backward dynamic programming to sum over the states at the finitely many observation times (see \cite{Hobolth09} for a review).  Unfortunately, matrix exponentiation
is an expensive operation that scales as $O(N^3)$, cubically in the number of states. Moreover, the matrix resulting from matrix exponentiation is dense and any structure (like sparsity), in the rate matrix $A$ cannot be exploited.

A third approach is, like ours, based on the idea of uniformization \citep{Hobolth09}. 
This proceeds by producing independent posterior samples of the Poisson events $W$ in the interval between observations, and then (like our sampler) running a discrete-time Markov chain
on this set of times to sample a new trajectory. However, sampling from the posterior distribution over $W$ is not easy, 
depending crucially on the observation process, and usually requires a random number of $O(N^3)$ matrix multiplications (as the sampler iterates over the possible number of Poisson events). 
By contrast, instead of producing {independent} samples, ours is an {MCMC} algorithm.
At the price of producing dependent samples, our method scales as $O(N^2)$ given a random discretization of time, 
does not require matrix exponentiations, easily exploits 
structure in the rate matrix and naturally extends to various extensions of MJPs. Moreover, we demonstrate that our sampler mixes very rapidly.
We point out here that as the number of states $N$ increases, if the transition rates $A_{ss'}$, $s \neq s'$, remain $O(1)$, then the uniformization rate $\Omega$ and the total number of state transitions
are $O(N)$. Thus, our algorithm now scales overall as $O(N^3)$, while the matrix exponentiation-based approach is $O(N^4)$. 
In either case, whether $A_{ss'}$ is $O(1)$ or $O(1/N)$, 
our algorithm is an order of magnitude faster.

\subsection{Bayesian Inference on the MJP Parameters} \label{sec:bayes_mjp}

In this section we briefly describe how full Bayesian analysis can be performed by placing priors on the MJP parameters $A$ and $\pi_0$ and sampling them as part of the MCMC algorithm. 
Like \cite{FearnSher2006}, we place independent gamma priors on the (negative) diagonal elements of $A$ and independent Dirichlet priors on the
transition probabilities. In particular, for all $s$ let $p_{s's} = {A_{s's}}/{|A_s|}$ and define the prior:
\begin{align}
 |A_s| & \sim \Gam(\alpha_1, \alpha_2) , \nonumber \\ 
 (p_{s's}, s'\neq s) &\sim \Dir(\boldsymbol{\beta}) . \nonumber 
\end{align}
This prior is conjugate, with sufficient statistics for the posterior distribution given a trajectory $\bS(t)$ being
the total amount of time $T_s$ spent in each state $s$ and the number of transitions $n_{s's}$ from each $s$ to $s'$.  In particular, 
\begin{align}
 |A_s|\, |\, (s_0,S,T) & \sim \Gam(\alpha_1+\textstyle \sum_{s'\neq s} n_{s's },\alpha_2+T_s) , \quad \text{and} \label{eq:A_post}\\
 (p_{s's}, s'\neq s)|(s_0,S,T)
    & \sim  \Dir(\boldsymbol{\beta}+ (n_{s's},s'\neq s)) \label{eq:p_post}%
\end{align}
It is important to note that we resample the rate matrix $A$ conditioned on $(s_0, S, T)$, and \emph{not} $(v_0, V, W)$. A new rate matrix $\tilde{A}$
implies a new uniformization rate $\tilde{\Omega}$, and in the latter case, we must also account for the probability of the Poisson events $W$ under 
$\tilde{\Omega}$. Besides being more complicated, this coupling between $W$ and $\Omega$ can slow down mixing of the MCMC sampler. Thus, we first discard the thinned events $U$, update $A$ conditioned only on the MJP trajectory, 
and then reintroduce the thinned events under the new parameters. We can view
the sampler of Algorithm \ref{alg:mcmc_smplr} as a transition kernel $\mathcal{K}_A((s_0,S,T),(\tilde{s}_0,\tS, \tT))$ that preserves the posterior distribution under the rate matrix $A$.
Our overall sampler then alternately updates $(s_0, S,T)$ via the transition kernel $\mathcal{K}_A(\cdot,\cdot)$, and 
then updates $A$ given $(s_0, S,T)$.

Finally, we can either fix $\pi_0$ or (as is sometimes appropriate) set it equal to the stationary distribution of the MJP with rate matrix $A$. In the latter case, Equations \eqref{eq:A_post} and \eqref{eq:p_post}
serve as a Metropolis-Hastings proposal. We accept a proposed $\tilde{A}$ sampled from this distribution with probability equal to the probability of the
current initial state under the stationary distribution of $\tilde{A}$. Note that computing this 
stationary distribution requires solving an $O(N^3)$ eigenvector problem, so that in this case, the overall Gibbs sampler scales cubically even though Algorithm \ref{alg:mcmc_smplr} scales quadratically.

\subsection{Experiments} \label{mjp:expt}

We first look at the effect of the parameter $\Omega$ on the mixing on the MCMC sampler. 
We generated a random 5-by-5 matrix $A$ (with hyperparameters $\alpha_1 = \alpha_2 = \beta = 1$), and used this to generate an MJP 
trajectory with a uniform initial distribution over states.
The state of this MJP trajectory was observed via a Poisson process likelihood model (see Section \ref{sec:mmpp}), and posterior samples given the observations and $A$ were produced by a C++ implementation of
our algorithm.  1000 MCMC runs were performed, each run consisting of $10000$ iterations after a burn-in of $1000$ iterations. 
 For each run, the number of transitions as well as the time spent in each state was calculated, and effective sample sizes (ESSs) of these statistics 
(the number of independent samples with the same `information' as the correlated MCMC samples) were calculated using R-CODA \citep{Rcoda2006}. 
The overall ESS of a run is defined to be the median ESS across all these ESSs. 
 
Figure \ref{fig:unifrate} (left) plots the overall ESS  against computation time per run, for different scalings $k$, where $\Omega=k\max_s|A_s|$.
We see that increasing $\Omega$ does increase
the mixing rate, however the added computational cost quickly swamps out any benefit this might afford. 
Figure \ref{fig:unifrate} (right) is a similar plot for the case where we also performed  Bayesian inference for the MJP parameter $A$ as described in Section \ref{sec:bayes_mjp}. 
Now we estimated the ESS of all off-diagonal elements of the matrix $A$, and the overall ESS of an
MCMC run is defined as the median ESS.  Interestingly, in this scenario, the ESS is fairly insensitive to 
$\Omega$, suggesting an `MCMC within Gibbs' update as proposed here using dependent trajectories is as effective as one using independent trajectories.  We found this to be true in general: when embedded within an outer 
MCMC sampler, our sampler produced similar effective ESSs as an MJP sampler that produces independent trajectories. The latter is typically more expensive,
and in any case, we will show that the computational savings provided by our sampler far outweigh the cost of dependent trajectories.

  \begin{figure}[ht]
  \begin{minipage}[b]{0.48\linewidth}
  \centering
    \includegraphics[width=\textwidth]{.//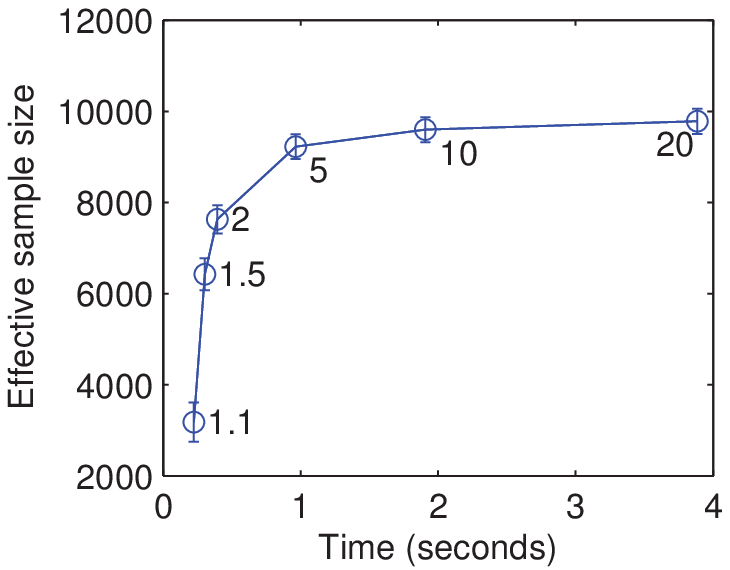} 
  \end{minipage}
  \begin{minipage}[b]{0.48\linewidth}
  \centering
    \includegraphics[width=\textwidth]{.//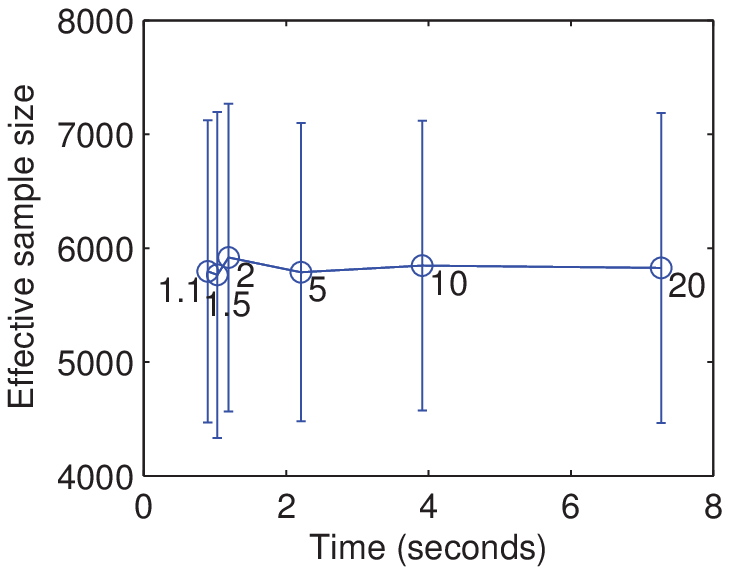} 
  \end{minipage}

  \caption{Effective sample sizes vs computation times for different scalings of $\Omega$ for (left) a fixed rate matrix $A$ and (right) Bayesian inference on $A$.  Whiskers are quartiles over 1000 runs.}

    \label{fig:unifrate}
  \end{figure}

In light of Figure \ref{fig:unifrate}, for all subsequent experiments we set $\Omega = 2 \max_s|A_s|$. Figure \ref{fig:burnin} shows the initial burn-in of a sampler with this setting for
different initializations.  The vertical axis shows the number of state transitions in the MJP trajectory of each iteration. This quantity quickly reaches its equilibrium value within a few iterations.

  \begin{figure}[ht]
  \centering
    \includegraphics[width=0.5\textwidth]{.//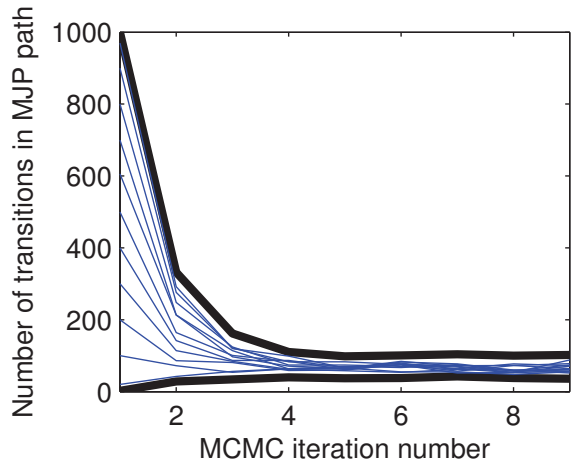} 
   \caption{Trace plot of the number of MJP transitions for different initializations.  Black lines are the maximum and minimum number of MJP transitions for each iteration, over all initializations.}
    \label{fig:burnin}
  \end{figure}

\section{Markov-Modulated Poisson Processes}\label{sec:mmpp}
A Markov modulated Poisson process  (MMPP) is a doubly-stochastic Poisson process whose intensity function is 
piecewise-constant and distributed according to a Markov jump process.
Suppose the MJP $(\bS(t),t\in[t_{start},t_{end}])$ has $N$ states, and is parametrized by an initial distribution over states $\pi_0$ and a rate matrix $A$. Associate with each state 
$s$ a nonnegative constant $\lambda_s$ called the emission rate of state $s$. Let $O$ be a set of points drawn from a Poisson process with piecewise-constant rate $R(t)=\lambda_{\bS(t)}$. 
Note that $O$ is unrelated to the subordinating Poisson process from the uniformization-based construction of the MJP, and we call
it the output Poisson process. 
The Poisson observations $O$ effectively form a continuous-time observation of the latent MJP, with the \emph{absence}
of Poisson events also informative about the MJP state. 
MMPPs have been used to model phenomenon like the distribution of rare DNA motifs along a gene \citep{FearnSher2006}, 
photon arrival in single molecule fluorescence experiments
\citep{Burzy2003}, and requests to web servers \citep{scottmmpp03}. 

\cite{FearnSher2006} developed an exact sampler for MMPPs based on a dynamic program for calculating the probability of $O$ marginalizing out the MJP trajectory.  
The dynamic program keeps track of the probability of the MMPP emitting all Poisson events prior to a time $t$ and ending in MJP state $s$.  The dynamic program then proceeds by 
iterating over all Poisson events in $O$ in increasing order, at each iteration updating probabilities using matrix exponentiation.  
A backward sampling step then draws an exact posterior sample of the MJP trajectory $(\bS(t), t\in O)$ evaluated at the times in $O$.  Finally a uniformization-based endpoint conditioned MJP sampler is used to fill in  the MJP trajectory between every pair of times in $O$.

The main advantage of this method is that it produces independent posterior samples. It does this at the price of being fairly complicated and computationally intensive. 
Moreover, it has the disadvantage of operating at the time scale of the Poisson observations rather than the dynamics of the latent MJP. For high Poisson rates, the 
number of matrix exponentiations will be high even if the underlying MJP has very low transition rates. This can lead to an inefficient algorithm.

Our MCMC sampler outlined in the previous section can be straightforwardly extended to the MMPP without any of these disadvantages. 
Resampling the auxiliary jump events (step $1$ in algorithm \ref{alg:blk_gibbs})
remains unaffected, since conditioned on the current MJP trajectory, they are independent of the observations $O$. Step $2$ requires calculating
the emission likelihoods $L_i(s)$, which is simply
given by:
\begin{align}
  L_i(s) &= \left(\lambda_{s}\right)^{|O_i|} \exp\left(-\lambda_{s}(w_{i+1} - w_i)\right) , \nonumber 
\end{align}
 $|O_i|$ being the number of events of $O$ in the interval $[w_i,w_{i+1})$. %
Note that evaluating this likelihood only requires counting the number of observed Poisson events between every successive pair of times in $W$.  Compared to our algorithm, the approach of \cite{FearnSher2006} is much more involved and inefficient.

\subsection{Experiments} \label{mmpp_expts}
  In the following, we compare a C++ implementation of our algorithm with an implementation\footnote{Code was downloaded from Chris Sherlock's webpage.} of the algorithm of 
\cite{FearnSher2006}, coded in C. We performed fully Bayesian
inference, sampling both the MJP parameters (as described in Section \ref{sec:bayes_mjp}) and the Poisson rates $\lambda_s$ (conjugate gamma priors were
placed on these). In all instances, our algorithm did 
significantly better,
the performance improvement increasing with the complexity of the problem.

  \begin{figure}[ht]
  \begin{minipage}[b]{0.32\linewidth}
  \centering
    \includegraphics[width=\textwidth]{.//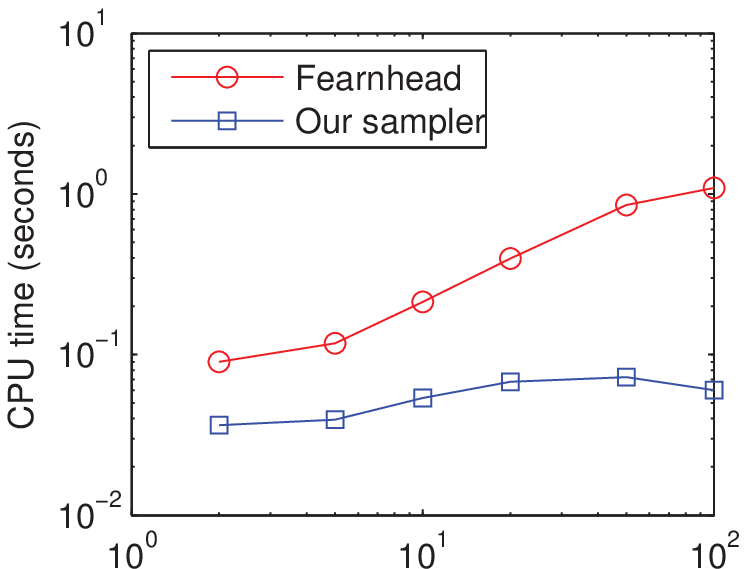} 
  \end{minipage}
  \begin{minipage}[b]{0.32\linewidth}
  \centering
    \includegraphics[width=\textwidth]{.//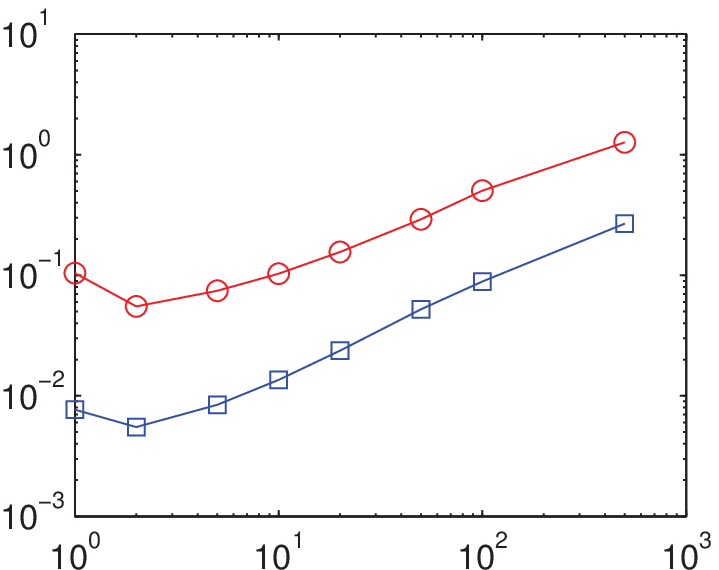} 
  \end{minipage}
  \begin{minipage}[b]{0.32\linewidth}
  \centering
    \includegraphics[width=\textwidth]{.//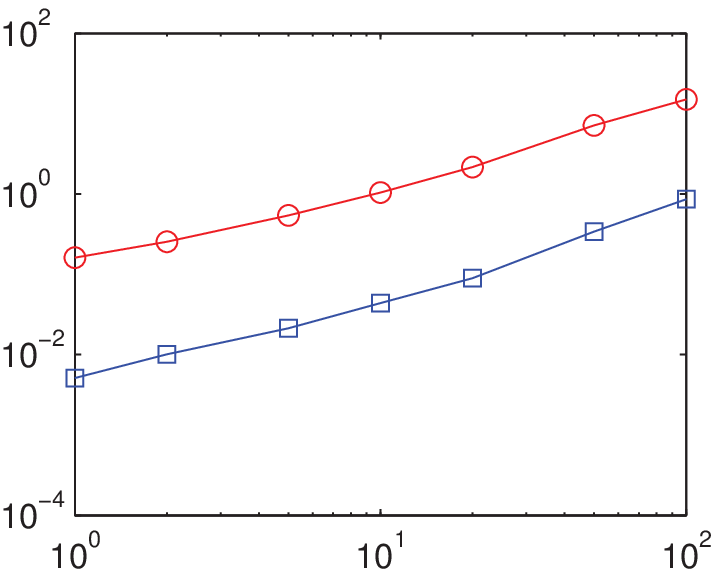} 
  \end{minipage}
    \caption{CPU time to produce 100 effective samples as we observe (left) increasing number of Poisson events in an interval of length 10, 
(centre) 10 Poisson events over increasing time intervals, and (right) increasing intervals with the number of events increasing on average.}
  \label{fig:mmpp_len}
  \end{figure}

In the first set of experiments, the dimension of the latent MJP was fixed to $5$.
The prior on the rate matrix $A$ had parameters $\alpha_1 = \alpha_2 = \beta = 1$ (see Section \ref{sec:bayes_mjp}). 
The shape parameter of the gamma prior on the emission rate of state $s$, $\lambda_s$, was set to $s$
(thereby breaking  symmetry across states); the scale parameter was fixed at $1$. 
10 draws of $O$ were simulated using the MMPP.  For each  observed $O$, both MCMC algorithms were run for 1000 burn-in iterations followed by 10000 iterations where samples were collected.    For each run, the ESS for each parameter was estimated using R-CODA, and the overall ESS was defined to be the median ESS over all parameters.  

Figure \ref{fig:mmpp_len} reports the average computation times required by each algorithm to produce 100 effective samples, under different scenarios.  The leftmost plot shows the computation times as a function of the numbers of Poisson events observed in an interval of fixed length $10$.  For our sampler, 
increasing the number of observed events leaves the computation time largely unaffected, while for the sampler of \cite{FearnSher2006}, this
increases quite significantly. This reiterates the point that our sampler works at the time scale of the latent MJP, while \cite{FearnSher2006} work at the 
time scale of the observed Poisson process.
  In the middle plot, we fix the number of observed Poisson events to $10$, while increasing the length of the observation interval instead, while in the rightmost plot, we 
increase both the interval length and the average number of observations in that interval. In both cases, our sampler again offers increased efficiency of
up to two orders of magnitude. In fact, the only problems where we observed the sampler of \cite{FearnSher2006} to outperform ours were low-dimensional
problems with only a few Poisson observations in a long interval, and with one very unstable state. %
A few very stable MJP states and a
few very unstable ones results in a high uniformization rate $\Omega$ but only a few state transitions. The resulting large number of virtual jumps can
make our sampler inefficient.

  \begin{figure}[ht]
  \centering
    \includegraphics[width=.5\textwidth]{.//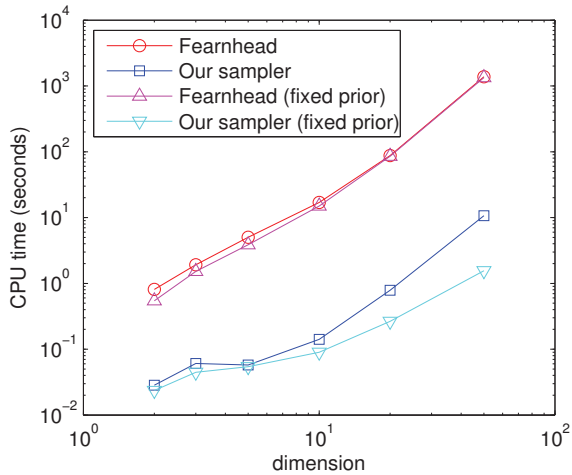} 
    \caption{CPU time to produce 100 effective samples as the MJP dimension increases}
    \label{fig:mmpp_dim}
  \end{figure}

In Figure \ref{fig:mmpp_dim}, we plot the time to produce $100$ effective samples as the number of states of the latent MJP increases. Here, we fixed the number
of Poisson observations to $10$ over an interval of length $10$. We see that our sampler (plotted with squares) offers substantial speed-up over the sampler of \cite{FearnSher2006}
(plotted with circles).  We see that for both samplers computation time scales cubically with the latent dimension. However, recall that this cubic scaling is not a
property of our MJP trajectory sampler; rather it is a consequence of using the equilibrium distribution of a sampled rate matrix as the initial distribution over states, which requires
calculating an eigenvector of a proposed rate matrix. If we fix the initial distribution over states (to the discrete uniform distribution), giving the line plotted with inverted triangles in the figure, we observe that our sampler scales quadratically.

\section{Continuous-Time Bayesian Networks (CTBNs)}\label{ctbn}

  \begin{figure}[!t]
  \begin{minipage}[b]{0.4\linewidth}
  \centering
    \includegraphics[width=.7\textwidth]{.//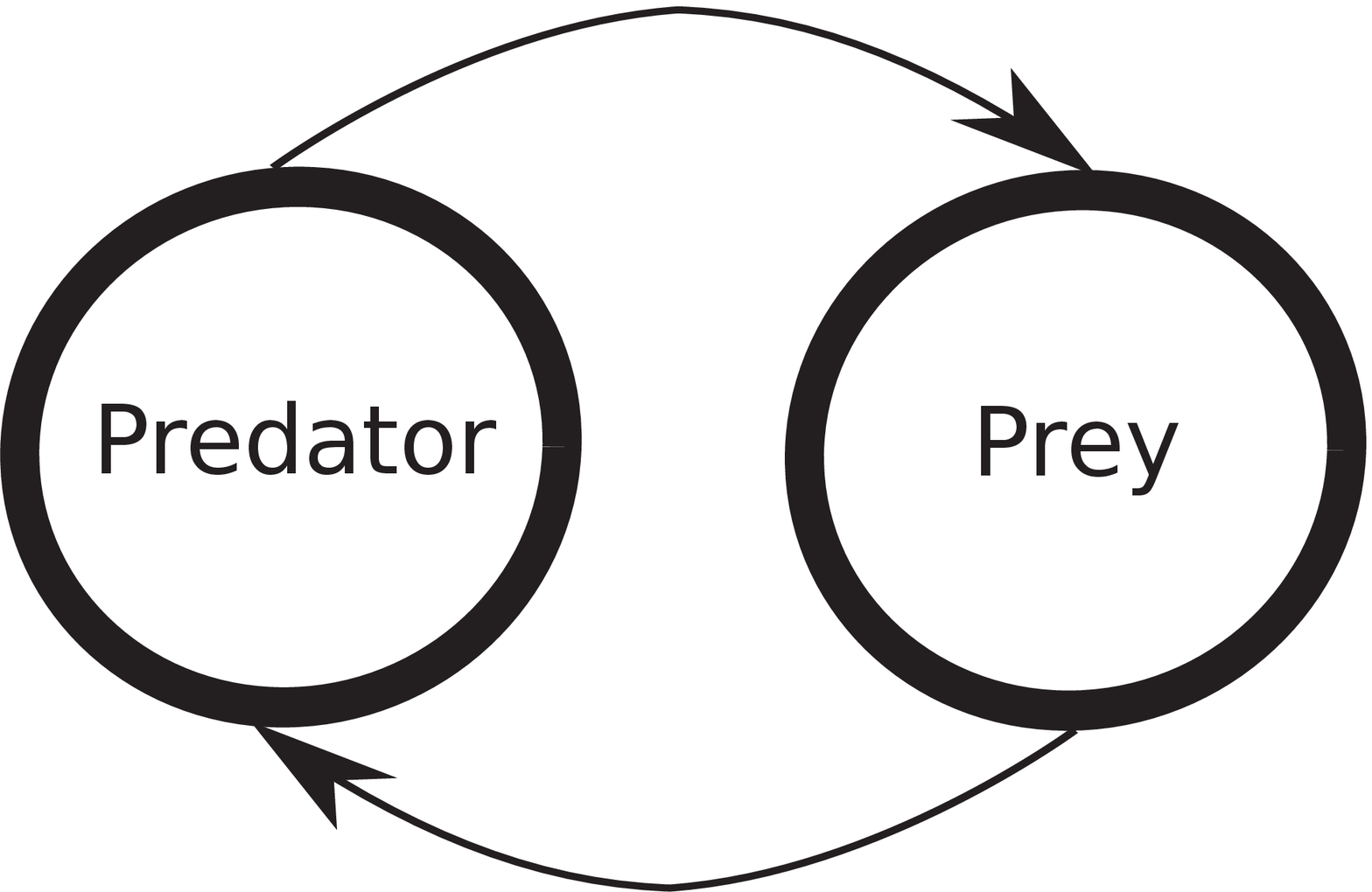} 
    \vspace{1in}
  \end{minipage}
  \begin{minipage}[b]{0.6\linewidth}
  \centering
    \includegraphics[width=.5\textwidth]{.//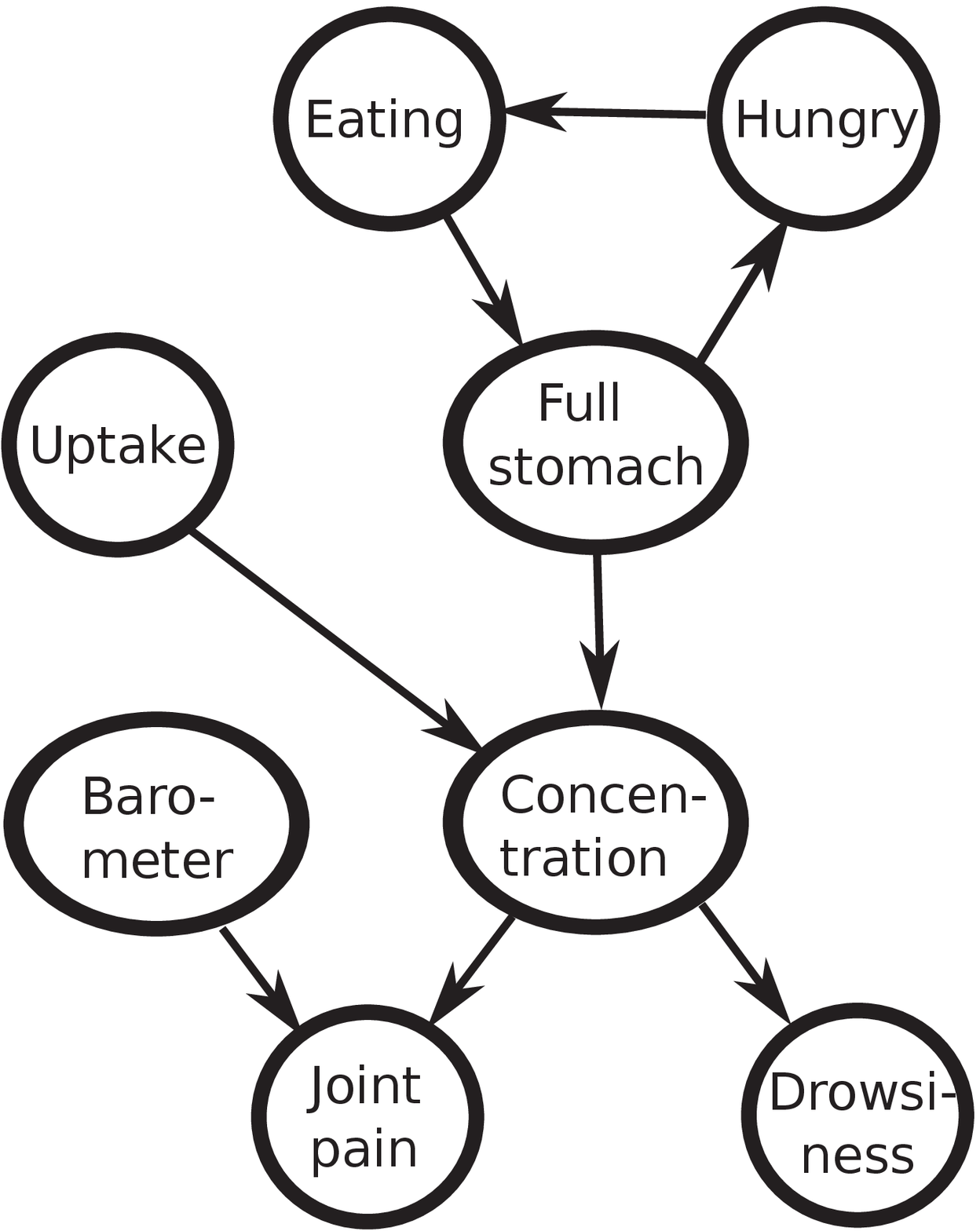} 
  \end{minipage}
   \caption[The predator-prey network  and the drug-effect CTBN]{The predator-prey network (left)  and the drug-effect CTBN (right)}
   \label{fig:drug_nw}
  \end{figure}

  Continuous-time Bayesian networks (CTBNs) are compact, multi-component representations of MJPs with structured rate matrices  \citep{Nodelman+al:UAI02}. Special instances of these models have 
long existed in the literature, particularly stochastic kinetic models like the Lotka-Volterra equations, which describe interacting populations of animal 
species, chemical reactants or gene regulatory networks \citep{wilk_stoch09}. 
There have also been a number of related developments, see for example \cite{Bolch1998} or \cite{didilez08}. For concreteness however, we shall focus on CTBNs, a formalism
introduced in \cite{Nodelman+al:UAI02} to harness the representational power of Bayesian networks to characterize structured MJPs. 

 Just as the familiar Bayesian network 
uses a product of conditional probability tables to represent a much larger probability table,
so too a CTBN represents a structured rate matrix with smaller conditional rate matrices. An $m$-component CTBN represents the state of an MJP at time $t$
with the states of $m$ nodes $\bS^1(t),\ldots,\bS^m(t)$ in a directed (and possibly cyclic) graph $\mathcal{G}$.
Figure \ref{fig:drug_nw} shows two CTBNs,
the `predator-prey network' and the `drug-effect network'. The former is a CTBN governed by the Lotka-Volterra 
equations (see subsection \ref{sec:lotka_volterra}), %
while the latter is used to model the dependencies in events leading to and following a patient taking a drug \citep{Nodelman+al:UAI02}. 

Intuitively, each node of the CTBN acts as an MJP with an instantaneous rate matrix that depends on the current configuration of its parents (and not its children, although the presence of directed cycles means a child can be a parent as well). 
The trajectories of all nodes are piecewise constant, and when a node changes state, the event rates of all its children change.
The graph $\mathcal{G}$  and the set of rate matrices (one for each node and for each configuration of its parents) characterize the dynamics
of the CTBN, the former describing the structure of the dependencies between various components, and the latter quantifying this.
Completing the specification of the CTBN is an initial distribution $\pi_0$ over the state of nodes, possibly specified via a Bayesian network. 

  \begin{figure}[!t]
  \centering
    \includegraphics[width=0.8\textwidth]{.//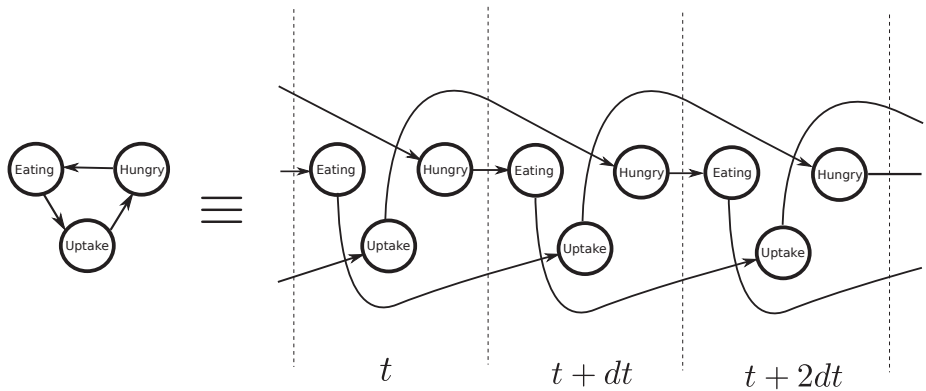} 
    \caption[The CTBN as a continuous time DBN]{Expanded CTBN}
    \label{fig:triang_exp}
\vspace{-.1in}
  \end{figure}

It is convenient to think of a CTBN as a compact representation of an expanded (and now acyclic) graph, consisting of
the nodes of $\mathcal{G}$ repeated infinitely along a continuum (viz.\ time). In this graph, arrows lead from a node at a time $t$ to instances of its 
children at time $t+\dif t$. 
Figure \ref{fig:triang_exp} displays this for a section of the drug-effect CTBN. 
The rates associated with a particular node at time $t+\dif t$ are determined by the configuration of its parents at time $t$. 
Figure \ref{fig:triang_exp} is the continuous-time limit of a class of discrete-time models called dynamic Bayesian networks (DBNs)
 \citep{Murphy02dynamicbayesian}. In a DBN, the state of a node at stage $i+1$ is dependent upon the configuration of its
parents at stage $i$.
Just as MJPs are continuous-time limits of discrete-time Markov chains, CTBNs are also continuous-time limits of DBNs.

It is possible to combine all local rate matrices of a CTBN into one global rate matrix \citep[see][]{Nodelman+al:UAI02}, 
resulting in a simple MJP whose state-space is the product state-space of all component nodes.
Consequently, it possible, conceptually at least, to directly sample a trajectory over an interval
$[t_{start},t_{end}]$ using Gillespie's algorithm. However, with an eye towards inference, 
Algorithm \ref{alg:ctbn_pr} 
describes a generative process that exploits the structure in the graph $\mathcal{G}$. 
Like Section \ref{sec:mjp}, we represent the trajectory of the CTBN, $\bS(t)$, with the initial state $s_0$ and the pair of sequences $(S,T)$. 
Let the CTBN have $m$ nodes. 
Now, $s_i$, the $i$th element of $S$, is an $m$-component vector representing the states of 
all nodes at $t_i$, the time of the $i$th state change of the CTBN.  We write this as $s_i = (s^1_i, \cdots, s^m_i)$. 
Let $k_i$ identify the component of the CTBN that changed state at $t_i$.
The rate matrix of a node $n$ varies over time as the
configuration of its parents changes, and we will write $A^{n,t}$ for the relevant matrix at time $t$.  Following Equation \eqref{eq:path_prob2}, we can write down the probability density of $(s_0, S,T)$ as
\begin{align}
p(s_0, S,T) &= \pi_0(s_0) \l( \prod_{i=1}^{|T|} A^{k_i,t_{i-1}}_{s^{k_i}_{i}s^{k_i}_{i-1}} \r)
\exp\l(-\sum_{k=1}^m \int_{t_{start}}^{t_{end}} |A^{k,t}_{\bS^k(t)}| \dif t\r) . \label{eq:ctbn_tr1} 
\end{align}

\begin{algorithm}[H]
\caption{Algorithm to sample a CTBN trajectory on the interval $[t_{start}, t_{end}]$}\label{alg:ctbn_pr}
\begin{tabular}{p{1.4cm}p{12.2cm}}
\textbf{Input:}  & The CTBN graph $\mathcal{G}$, a set of rate matrices $\{A\}$ for all nodes and for all \\
                 & parent configurations and an initial distribution
                   over states $\pi_0$.\\
\textbf{Output:} & A CTBN trajectory $\bS(t) \equiv (s_0, S,T)$.\\
\line(1,0){342}
\end{tabular}
\begin{algorithmic}[1]
\State Draw an initial configuration $s_0 \equiv (s^1_0,s^2_0,...) \sim \pi_0$. Set $t_0 = t_{start}$ and $i=0$. 
\Loop
\State For each node $k$, draw $z^k \sim \exp(|A^{k,t_i}_{s^k_i}|)$. \State Let $k_{i+1} = \argmin_k z^k$ be the first node to change state.
\State \textbf{If} $t_i+z^{k_{i+1}}>t_{end}$ \textbf{then return} $(s_0,\ldots,s_i,t_1,\ldots,t_i)$ and \textbf{stop}.
\State Increment $i$ and let $t_i = t_{i-1} + z^{k_i}$ be the next jump time. 
\State Let $s'=s^{k_i}_{i-1}$ be the previous state of node $k_i$.
\State Set $s^{k_i}_i = s$ with probability proportional to $ A^{k_i,t_{i-1}}_{ss'}$ for each $s\neq s'$.
\State Set $s^k_i = s^k_{i-1}$ for all $k \neq k_i$.
\EndLoop
\end{algorithmic}
\end{algorithm}

\subsection{Inference in CTBNs} \label{sec:ctbn_inf}
We now consider the problem of posterior inference over trajectories given some observations.
Write the parents and children of a node $k$ as $\mathcal{P}(k)$ and $\mathcal{C}(k)$ respectively.
Let $\mathcal{MB}(k)$ be the Markov blanket of node $k$, which consists of its parents, children, and the parents of its children.
Given the entire trajectories of all nodes in $\mathcal{MB}(k)$, node $k$ is independent of all other nodes in the network \citep{Nodelman+al:UAI02} (see also Equation \eqref{eq:ctbn_post} below). 
This suggests a Gibbs sampling scheme where the trajectory of each node is resampled given the configuration of its Markov blanket. This approach was followed 
by \cite{El_hay_gibbssampling}. 

However, even without any associated observations, sampling a node trajectory conditioned on the complete trajectory of its Markov 
blanket is not straightforward. 
To see this, rearrange the terms of Equation \eqref{eq:ctbn_tr1} to give
\begin{align}
p(s_0,S,T) &= 
 \pi_0(s_0) \prod_{k=1}^m \phi(S^k, T^k | s_0,S^{\mathcal{P}(k)}, T^{\mathcal{P}(k)}), \quad \text{and} \nonumber \\
 \phi(S^k, T^k | s_0,S^{\mathcal{P}(k)}, T^{\mathcal{P}(k)})
&=
\l(\prod_{i:k_i=k} A^{k,t_{i-1}}_{s^{k}_{i}s^{k}_{i-1}} \r)
\exp\l(-\int_{t_{start}}^{t_{end}} |A^{k,t}_{\bS^k(t)}| \dif t\r) ,  \label{eq:ctbn_cond1}
\end{align}
where for any set of nodes $B$, $(s_0^B,S^B,T^B)$ represents the associated trajectories.
Note that the $\phi(\cdot)$ terms are not conditional densities given parent trajectories, %
since the graph $\mathcal{G}$ can be cyclic. %
We must also account for the trajectories of node $k$'s children, so that the conditional density of $(s_0^k,S^k, T^k)$ is actually
\begin{align}
p(s^k_0,S^k, T^k| s^{\neg k}_0,S^{\neg k},T^{\neg k}) \propto &\pi_0(s_0^k|s_0^{\neg k}) \phi(S^k, T^k | s_0,S^{\mathcal{P}(k)}, T^{\mathcal{P}(k)}) \nonumber \\
&\cdot \prod_{c \in \mathcal{C}(k)} \phi(S^c, T^c | s_0,S^{\mathcal{P}(c)}, T^{\mathcal{P}(c)}) . \label{eq:ctbn_post}
\end{align}
Here $\neg k$ denotes all nodes other than $k$.
Thus, even over an interval of time where the parent configuration remains constant, the conditional distribution of 
the trajectory of a node is not a homogeneous MJP because of the effect of the node's children, which 
act as `observations' that are continuously observed.  
For any child $c$, if $A^{c,t}$ is constant over $t$, the corresponding $\phi(\cdot)$ is the density of an MJP given the initial state. 
Since $A^{c,t}$ varies in a piecewise-constant
manner according to the state of $k$, the $\phi(\cdot)$ term is actually the density of a piecewise-inhomogeneous MJP. 
Effectively, we have a `MJP-modulated MJP',
so that the inference problem here is a generalization of that for the MMPP of Section \ref{sec:mmpp}.

\cite{El_hay_gibbssampling} described a matrix-exponentiation-based algorithm to update the trajectory of node $k$. At a high-level their algorithm is similar to 
\cite{FearnSher2006} for MMPPs, with
the Poisson observations of the MMPP generalized to transitions in the trajectories of child nodes. Consequently, it uses an expensive forward-backward algorithm involving matrix exponentiations. In addition, \cite{El_hay_gibbssampling} resort to discretizing time via a binary search
to obtain the transition times upto machine accuracy.

\subsection{Auxiliary Variable Gibbs Sampling for CTBNs}

We now show how our uniformization-based sampler can easily be adapted to conditionally sample a trajectory for node $k$ without resorting to approximations.  
In the following, for notational simplicity.  we will drop the superscript $k$ whenever it is clear from context.  
For node $k$, the MJP trajectory $(s_0,S,T)$ has a uniformized construction from a subordinating Poisson process.  
The piecewise constant trajectories of the parents of $k$ imply that the MJP is piecewise homogeneous, and we will use a piecewise constant rate $\Omega^t$ which dominates the 
associated transition rates, i.e., $\Omega^t> |A^{k,t}_{s}|$ for all $s$.  This allows the dominating rate to `adapt' to the local transition rates, and 
is more efficient when, e.g., the transition rates associated with different parent configurations are markedly different.
Recall also that our algorithm first reconstructs the thinned Poisson events $U_{\cT}$ using a piecewise homogeneous Poisson process with rate 
$(\Omega^t + A^{k,t}_{\bS(t)})$, and then updates the trajectory using the forward-backward algorithm (so that
$W=T\cup U_{\cT}$ forms the candidate transitions times of the MJP).

In the present CTBN context, just as the subordinating Poisson process is inhomogeneous, so too the Markov chain used for the 
forward-backward algorithm will have different transition matrices at different times.  In particular, the transition matrix at a time $w_i$ (where $W=(w_1,\ldots,w_{|W|})$)  is 
\begin{align}
B_i=I+\frac{A^{k,w_i}}{\Omega^{w_i}} .  \nonumber
\end{align}

Finally, we need also to specify the likelihood function $L_i(s)$ accounting for the trajectories of the children in addition to actual observations in each time interval $[w_i,w_{i+1})$. 
From Equations \eqref{eq:ctbn_cond1} and \eqref{eq:ctbn_post}, this is given by
\begin{align}
  L_i(s) &= L_i^O(s) \prod_{c\in \mathcal{C}(k)}
\l(\prod_{j:k_j=k,t_j\in[w_i,w_{i+1})} \hspace{-.1in} A^{k,t_{j-1}}_{s^{k}_{j}s^{k}_{j-1}} \r)
\exp\l(-\hspace{-.02in} \int_{w_i}^{w_{i+1}} \hspace{-.05in} |A^{k,t}_{\bS^k(t)}| \dif t\r) , \nonumber
\end{align}
where $L_i^O(s)$ is the likelihood coming from actual observations dependent on the state of node $k$ in the time interval.  Note that the likelihood above depends only on the number of transitions each of the children make as well as how much time they spend in each state, for each parent configuration.

The new trajectory $\tbS^k(t)$ is now obtained using the forward-filtering backward-sampling algorithm, with the given inhomogeneous transition matrices and likelihood functions.  The following proposition now follows directly from our previous results in Section \ref{sec:MJP_MCMC_UNIF}:
\begin{prop}
The auxiliary variable Gibbs sampler described above converges to the posterior distribution over the CTBN sample paths. 
\end{prop}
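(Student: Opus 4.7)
The plan is to reduce the claim to the single-node MJP result from Section \ref{sec:MJP_MCMC_UNIF}, specifically Proposition \ref{prop:ergdcty}. The overall CTBN sampler is a Gibbs sampler that cycles through nodes $k=1,\ldots,m$ and, for each, replaces $(s_0^k,S^k,T^k)$ using the uniformization-based update. So it suffices to show that each per-node update leaves the full conditional
\begin{align*}
p(s^k_0,S^k, T^k \mid s^{\neg k}_0,S^{\neg k},T^{\neg k}, X)
\end{align*}
invariant, and that when composed over nodes the chain is irreducible. Standard Gibbs-sampling theory then yields convergence to $p(s_0,S,T\mid X)$.

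First I would observe that, conditional on the trajectories of the Markov blanket of node $k$, the distribution of $(s_0^k,S^k,T^k)$ given in Equation \eqref{eq:ctbn_post} has exactly the form of an MJP path density modulated by likelihood factors (the $\phi(\cdot)$ terms from the children), except that the rate matrix is piecewise constant over time rather than globally constant, with changes occurring at the transition times of $k$'s parents. Between consecutive parent transitions, $k$ evolves as a homogeneous MJP with generator $A^{k,t}$, and the children's contribution to the conditional density factors into a per-interval likelihood that depends only on the number of child transitions in that interval and the time spent by $k$ in each state — precisely the $L_i(s)$ displayed just above the proposition.

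Next I would verify that the uniformization machinery of Section \ref{sec:MJP_MCMC_UNIF} extends verbatim to this piecewise-homogeneous setting when we use a piecewise-constant dominating rate $\Omega^t>|A^{k,t}_s|$. The derivation of Proposition \ref{prop:unif} goes through interval by interval, because within each interval of constant parent configuration we have a homogeneous MJP with dominating rate, and the subordinated Markov chain with transition matrix $B_i = I + A^{k,w_i}/\Omega^{w_i}$ agrees with the generator in that interval. Consequently, given the current path the thinned Poisson events have the inhomogeneous Poisson law with intensity $\Omega^t + A^{k,t}_{\bS^k(t)}$ (the analogue of Proposition 3), and the forward-filtering backward-sampling step using the likelihoods $L_i(\cdot)$ resamples $(s_0^k, V)$ exactly from its conditional distribution on the random grid. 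The two steps together constitute an auxiliary-variable Gibbs move on the joint $(s_0^k,S^k,T^k,U)$, so the $(s_0^k,S^k,T^k)$ marginal of its stationary distribution coincides with the desired conditional.

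For irreducibility, I would argue as in Proposition \ref{prop:ergdcty}: since $\Omega^t>\max_s|A^{k,t}_s|$ strictly on every interval, the thinned Poisson process has strictly positive intensity everywhere, so from any current path the sampler can place candidate jumps anywhere in $[t_{start},t_{end}]$ with positive density and FFBS assigns any state sequence compatible with the rate matrix and likelihoods with positive probability; hence every node's update kernel has full support on its conditional, and the composite sweep over $k=1,\ldots,m$ is irreducible on the joint path space. Combining invariance of each conditional with irreducibility of the sweep gives convergence to the posterior. The main obstacle is really just the bookkeeping for the piecewise-inhomogeneous extension of Propositions \ref{prop:unif}--\ref{prop:ergdcty} and the identification of $L_i(s)$ with the correct child-contribution from Equation \eqref{eq:ctbn_post}; once these are in place the rest is a direct appeal to the single-node argument.
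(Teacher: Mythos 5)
Your proposal is correct and takes essentially the same route as the paper, whose entire ``proof'' is the one-line assertion that the result follows directly from the results of Section~\ref{sec:MJP_MCMC_UNIF}: you simply make explicit the reduction of each node's full conditional (Equation~\eqref{eq:ctbn_post}) to a piecewise-inhomogeneous MJP with child-trajectory likelihoods, the interval-by-interval extension of the uniformization propositions under the piecewise-constant dominating rate, and the invariance/irreducibility bookkeeping for the node-wise sweep. The only point you gloss over (as does the paper) is that irreducibility of the composite Gibbs sweep on the joint path space needs a positivity condition on the joint density beyond full support of each per-node kernel, which can fail when many conditional rates are zero as in the Lotka--Volterra example.
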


Note that our algorithm produces a new trajectory that is dependent, through $T$, on the previous trajectory (unlike a true Gibbs update as in \cite{El_hay_gibbssampling} 
where they are independent).  However, we find that since the update is part of an overall Gibbs cycle over nodes of the CTBN, the mixing rate is actually dominated by dependence 
across nodes. Thus, a true Gibbs update has negligible benefit towards mixing, while being more expensive computationally.

\subsection{Experiments}\label{expr}
  In the following, we evaluate a C++ implementation of our algorithm on a number of CTBNs. As before, the dominating rate $\Omega^{t}$
was set to $ \max_s 2|A^{k,t}_s|$.%
\subsubsection{The Lotka-Volterra Process} \label{sec:lotka_volterra}

  We first apply our sampler to the Lotka-Volterra process \citep{wilk_stoch09, Opper_varinf}. Commonly referred to as the predator-prey model, this describes 
the evolution of two interacting populations of `prey' and `predator' species. The two species form the two nodes of a cyclic CTBN (Figure \ref{fig:drug_nw} (left)), whose states 
$x$ and $y$ represent the sizes of the prey and predator populations. The process rates are given by
\begin{align}
   A\l(\{x,y\} \rightarrow \{x+1,y\} \r)  &= \alpha x ,  &   
   A\l(\{x,y\} \rightarrow \{x-1,y\} \r)   &= \beta x y , \nonumber \\
   A\l(\{x,y\} \rightarrow \{x,y+1\} \r)  &= \delta x y , & 
   A\l(\{x,y\} \rightarrow \{x,y-1\} \r)   &= \gamma y , \nonumber
\end{align}
where we set the parameters as follows: 
$\alpha = 5 \times 10^{-4}, \beta = 1 \times 10^{-4}, \gamma = 5 \times 10^{-4}, \delta = 1 \times 10^{-4}$. All other rates are $0$.
  \begin{figure}[!t]
  \begin{minipage}[b]{0.5\linewidth}
  \centering
    \includegraphics[width=\textwidth]{.//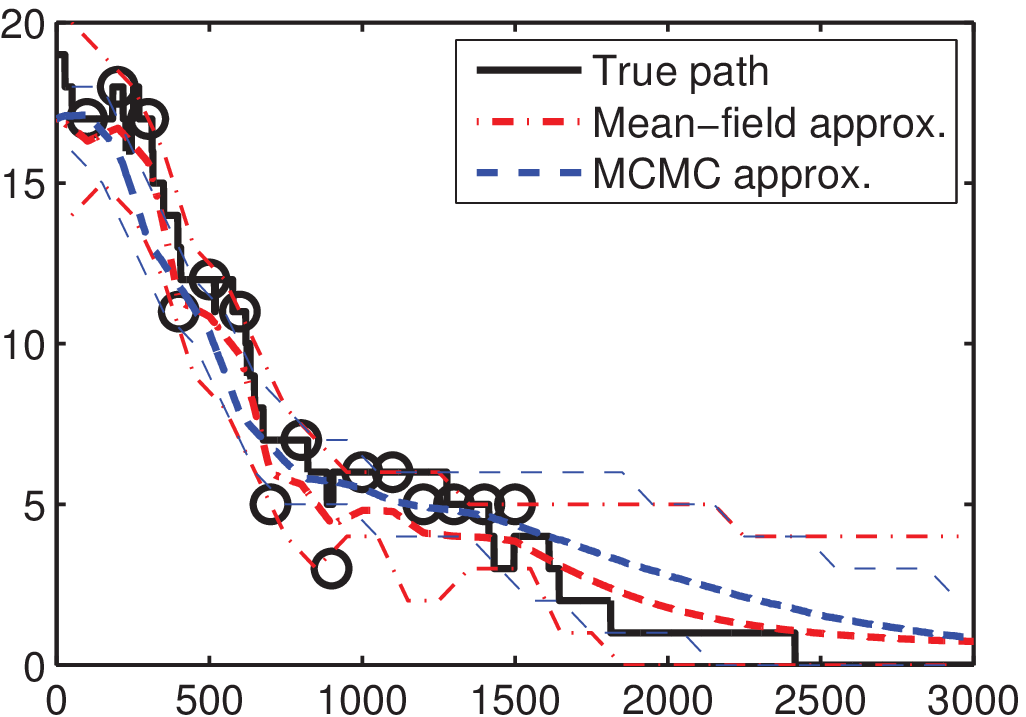} 
  \end{minipage}
  \begin{minipage}[b]{0.5\linewidth}
  \centering
    \includegraphics[width=\textwidth]{.//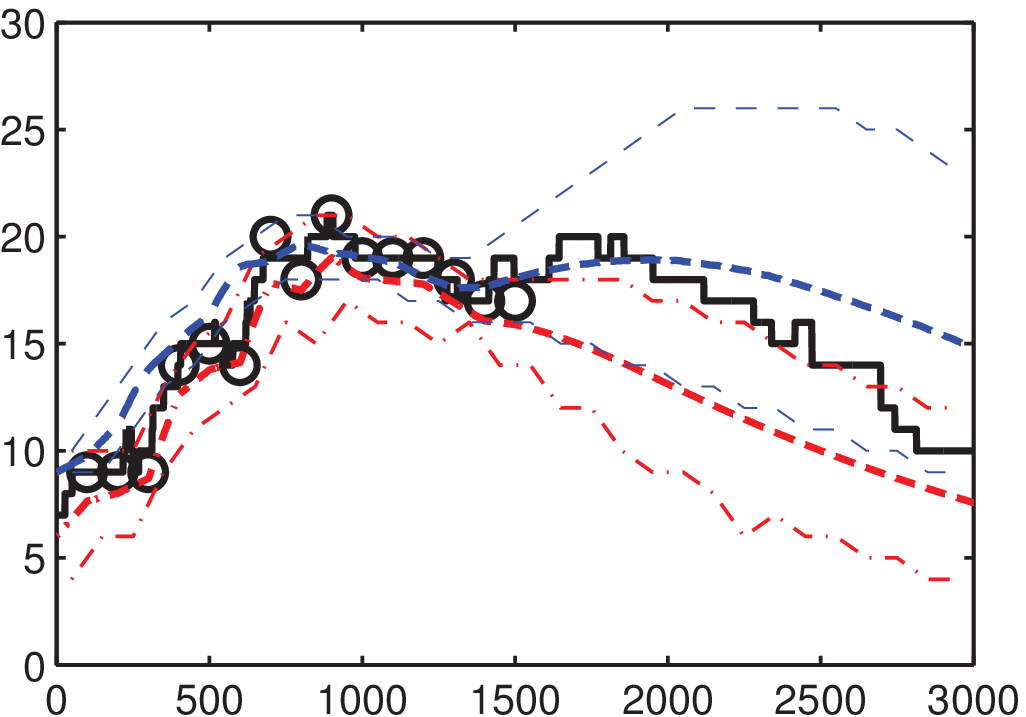} 
  \end{minipage}
   \caption[Posterior distributions over prey and predator populations]{Posterior (mean and 90\% credible intervals) over (left) prey and (right) predator paths (observations (circles) were available only until $1500$).}
   \label{fig:pred_prey}
  \end{figure}
This defines two infinite sets of infinite-dimensional conditional rate matrices. In its present form, our sampler cannot handle this infinite state-space 
\citep[but see][]{RaoTeh2012a}. 
Like 
\cite{Opper_varinf}, we limit the maximum number of individuals of each species to 200, 
leaving us with 400 rate matrices of size $200\times 200$. Note that these matrices are tridiagonal and very sparse: at any time the size of each population can change by 
at most one. Consequently, the complexity of our algorithm scales 
\emph{linearly} with the number of states (we did not modify our code to exploit this structure, though this is straightforward). A `true' path of predator-prey population
sizes was sampled from this process, and its state at time $t=0$ was observed noiselessly. Additionally 15 noisy observations were generated, spaced uniformly at intervals of $100$. The noise process was:
\begin{align}
  p(X(t)|\bS(t)) & \propto  \frac{1}{2^{|X(t)-\bS(t)|} + 10^{-6}} . \nonumber
\end{align}
Given these observations (as well as the true parameter values), we approximated the posterior distribution over paths by two methods: using 1000 samples from our 
MCMC sampler (with a burn-in period of 100) and using the mean-field (MF) approximation of \cite{Opper_varinf}\footnote{We thank Guido Sanguinetti for
providing us with his code.}. We could not apply the implementation of the Gibbs sampler of \citep{El_hay_gibbssampling} (see Section \ref{sec:avg_rle}) to a state-space and
time-interval this large.
Figure \ref{fig:pred_prey} shows the true paths (in black), the observations
(as circles) as well as the posterior means and $90\%$ credible intervals produced by the two algorithms for the prey (left) and predator (right) populations. As can be seen, 
both algorithms do well over the first half of the interval where data is present.  In the second half, the MF algorithm appears to underestimate the predicted size of the 
predator population.  On the other hand, the MCMC posterior reflects the true trajectory better. In general, we found the MF algorithm to underestimate the posterior variance in the MJP
trajectories, especially over regions with few observations.

\subsection{Average Relative Error vs Number of Samples}  \label{sec:avg_rle}
           
  \begin{figure}
  \centering
    \includegraphics[width=.5\textwidth]{.//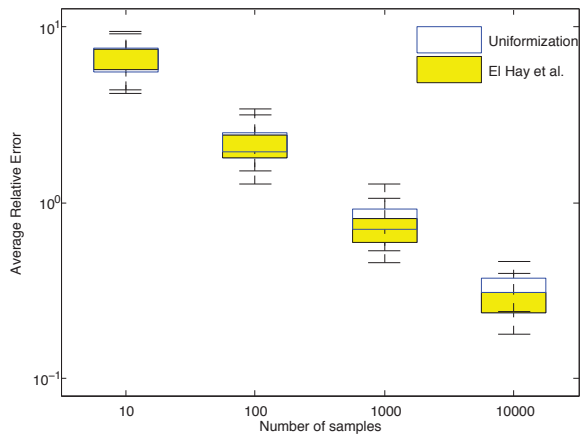} 
    \caption[Average relative error vs number of samples for CTBN samplers]{Average relative error vs number of samples for 1000 independent runs; burn-in = 200. Note that in this scenario, uniformization was about $12$
             times faster, so that for the same computational effort, it produces significantly lower errors.}
    \label{fig:mixing}
  \end{figure}

For the remaining experiments, we compared our sampler with the Gibbs sampler of \cite{El_hay_gibbssampling}.  For this comparison, we used the CTBN-RLE package of 
\cite{SheFanLamLeeXu10} (also implemented in C++). In all our experiments, as with the MMPP, we found our algorithm to be significantly faster, especially for larger
problems. To prevent details of the two implementations from clouding the picture and to reiterate the benefit afforded by avoiding matrix exponentiations, we also 
measured the amount of time CTBN-RLE spent exponentiating matrices. This constituted between $10\%$ to $70\%$ of the total running time of their
algorithm.  In the plots we refer to this as `El Hay et al. (Matrix Exp.)'. We found that our algorithm took less time than even this.

  In our first experiment,  we followed \cite{El_hay_gibbssampling} in studying how {average relative error} varies with the number of samples from the Markov chain. Average relative 
error is defined by $\sum_j \frac{|\hat{\theta_j} - \theta_j| } {\theta_j}$, and measures the total normalized difference between empirical ($\hat{\theta_j}$)
and true ($\theta_j$) averages of sufficient statistics of the posterior. The statistics in question are the time spent by each node in different 
states as well as the number of transitions from each state to the others. The exact values were calculated by numerical integration when possible, otherwise
from a very long run of CTBN-RLE.

 As in \cite{El_hay_gibbssampling}, we consider a CTBN with the topology of a chain, consisting of 5 nodes, each with 5 states. The states of the nodes were 
observed at times 0 and 20 and we produced endpoint-conditioned posterior samples of paths over the time interval $[0,20]$. We calculate the average relative error as a
function of the number of samples, with a burn-in of 200 samples. Figure \ref{fig:mixing} shows the results from running 1000 independent chains for both 
samplers.  Not surprisingly, the sampler of \cite{El_hay_gibbssampling}, which produces conditionally independent samples, has slightly lower errors. However
the difference in relative errors is minor, and is negligible when considering the dramatic (sometimes up to two orders of magnitude; see below) speed improvements of our algorithm. For instance, to produce the $10000$ samples, the \cite{El_hay_gibbssampling} sampler took about 6 minutes, while our sampler ran in about 30 seconds.

\subsection{Time Requirements for the Chain-Shaped CTBN}

  \begin{figure}[ht]
  \begin{minipage}[b]{0.32\linewidth}
  \centering
    \includegraphics[width=\textwidth]{.//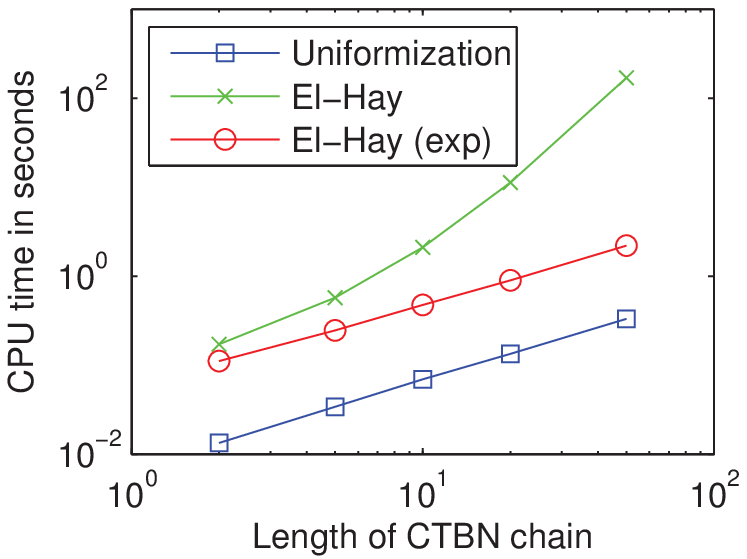} 
    \label{fig:len}
  \end{minipage}
  \begin{minipage}[b]{0.32\linewidth}
  \centering
    \includegraphics[width=\textwidth]{.//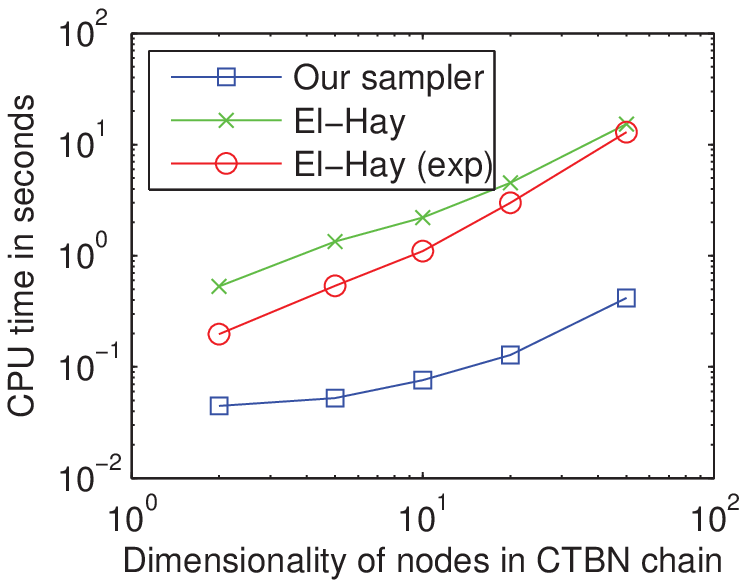} 
    \label{fig:dim}
  \end{minipage}
  \begin{minipage}[b]{0.32\linewidth}
  \centering
    \includegraphics[width=\textwidth]{.//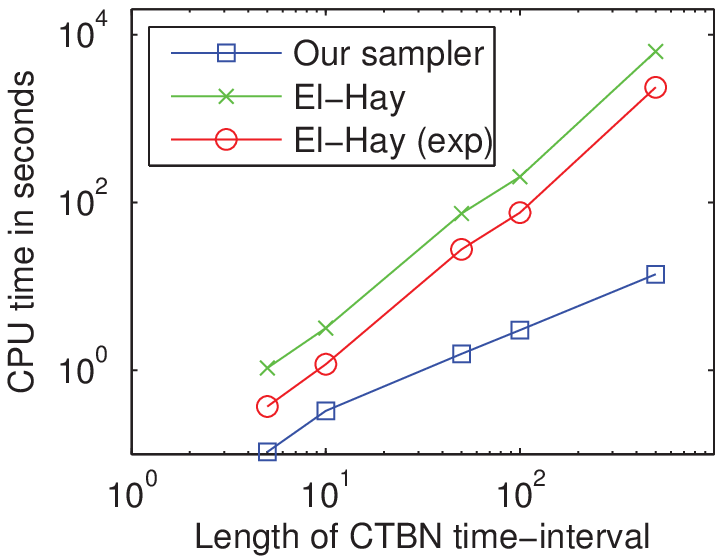} 
    \label{fig:int}
  \end{minipage}
  \caption{CPU time vs (left) length of CTBN chain (centre) number of states of CTBN nodes (right) time interval of CTBN paths.}
  \label{fig:len_dim_int}
  \end{figure}

In the next two experiments, we compare the times required by CTBN-RLE and our uniformization-based sampler to produce $100$ effective samples as the size of
the chain-shaped CTBN increased in different ways. In the first cases, we increased the length of the chain, and in the second, the dimensionality of each node.
In both cases, we produced posterior samples from an endpoint-conditioned 
CTBN with random gamma distributed parameters.

The time requirements were estimated from runs
of $10000$ samples after a burn-in period of $1000$ iterations. Since CTBN-RLE does not support Bayesian inference for CTBN parameters, we kept these fixed to
the truth.
To produce ESS estimates, we counted the number of transitions of each node and the amount of time spent in each state, and for each MCMC run, we
estimated the ESS of these quantities. Like in Section \ref{mmpp_expts}, the overall ESS is the median of these estimates. 
Each point in the figures is an average over $10$ simulations.

In the first of these experiments, we measured the times to produce $100$ effective samples for the chain-shaped CTBN described above, as the number of nodes in the chain 
(i.e., its length) increases. 
The leftmost plot in Figure \ref{fig:len_dim_int} shows the results. As might be expected, the time required by our algorithm grows linearly with the number of nodes. For
\cite{El_hay_gibbssampling}, the cost of the algorithm grows faster than linear, and quickly becoming unmanageable. The time spent calculating matrix exponentials 
\emph{does} grow linearly, however our uniformization-based sampler always takes less time than even this.

Next, we kept the length of the chain fixed at $5$, instead increasing the number of states per node.  As seen in the middle plot, once again, our sampler is always faster. Asymptotically,
we expect our sampler to scale as $O(N^2)$ and \cite{El_hay_gibbssampling} as $O(N^3)$.  While we have not hit that regime yet, we can see that the cost of our sampler
grows more slowly with the number of states.

\subsection{Time Requirements for the Drug-Effect CTBN}
Our final experiment, reported in the rightmost plot of Figure \ref{fig:len_dim_int}, measures the time required as the interval length $(t_{end} - t_{start})$ increases. For this experiment, we used
the drug-effect network shown in Figure \ref{fig:drug_nw}, where the parameters were set to standard values (obtained from CTBN-RLE) and the state of the network was fully observed at
the beginning and end times. Again, our algorithm is the faster of the two, showing a linear increases in computational costs with the length of the interval. 
It is worth pointing
out here that the algorithm of \cite{El_hay_gibbssampling} has a `precision' parameter, and that by reducing the desired temporal precision, faster performance
can be obtained. However, since our sampler produces \emph{exact} samples (up to numerical precision), our comparison is fair. In the above experiments,
we left this parameter at its default value.

\vspace{-.1in}

\section{Discussion}\label{discussion}

We proposed a novel Markov chain Monte Carlo sampling method for Markov jump processes. Our method exploits the simplification of the structure of the MJP resulting from the 
introduction of auxiliary variables via the idea of uniformization.  This constructs a Markov jump process by subordinating a Markov chain to a Poisson process, and 
amounts to running a Markov chain on a random discretization of time. Our sampler 
is a blocked Gibbs sampler in this augmented representation and proceeds by alternately resampling the discretization given the Markov chain and vice versa. Experimentally, we find that this auxiliary variable Gibbs 
sampler is computationally very efficient. %
The sampler easily generalizes to other MJP-based models, and we presented samplers for Markov-modulated Poisson processes and continuous-time Bayesian networks. In our experiments, we showed significant speed-up compared to state-of-the-art samplers for both.

Our method opens a number of avenues worth exploring. One concerns the subordinating Poisson rate $\Omega$ which acts as a free-parameter of the sampler. While our heuristic
of setting this to $\max_s 2|A_s|$ worked well in our experiments, this may not be the case for rate matrices with widely varying transition rates. A possible approach is to `learn' a good
setting of this parameter via adaptive MCMC methods. More fundamentally, it would be interesting to investigate if theoretical claims can be made about the `best' setting
of this parameter under some measures of mixing speed and computational cost.

Next, there are a number of immediate generalizations of our sampler. First, our algorithm is easily applicable to inhomogeneous Markov jump processes where 
techniques based on matrix exponentiation cannot be applied. Following recent work \citep{RaoTeh2011b}, we can also look at generalizing our sampler to semi-Markov processes 
where the holding times of the states follow non-exponential distributions.  These models find applications in fields like biostatistics, neuroscience and queuing theory \citep{mode1988}.
By combining our technique with slice sampling ideas \citep{Nea2003a}, we can explore Markov jump processes with countably infinite state spaces. Another generalization 
concerns MJPs with unbounded rate matrices. For the predator-prey model, we avoided this problem by bounding the maximum population sizes; otherwise %
it is impossible to choose a dominating $\Omega$. Of course, in practical settings, any trajectory from this process is bounded with 
probability $1$, and we can extend our method to this case by treating $\Omega$ as a trajectory dependent random variable. 
For some work in this direction, we refer to \cite{RaoTeh2012a}.

\vspace{-.1in}
\acks{This work was done while both authors were at the Gatsby Computational Neuroscience Unit, University College London.
We would like to acknowledge the Gatsby Charitable Foundation for generous funding. We thank the associate editor and the anonymous reviewers for useful
comments.}

\appendix
\section{The Forward-Filtering Backward-Sampling (FFBS) Algorithm} \label{sec:ffbs}

For completeness, we include a description of the forward-filtering backward-sampling algorithm for discrete-time Markov chains. The earliest references for
this that we are aware of are \cite{fru:StateSpace} and \cite{Carter96}.

Let $S_t,\ t \in \{0,\cdots T\}$ be a discrete-time Markov chain with a discrete state space $\cS \equiv \{1,\cdots N\}$.
We allow the chain to be inhomogeneous, with $B^t$ being the state transition matrix at time $t$ (so that $p(S_{t+1} = s'|S_t = s) = B^t_{s's}$). Let $\pi_0$ be
the initial distribution over states at $t=0$. Let $O^t$ be a noisy observation of the state at time $t$, with the likelihood given by $L^t(s) = p(O^t|S^t = s)$.
Given a set of observations $\mathcal{O} = (O^0,\cdots,O^{T})$, FFBS returns an independent posterior sample of the state vector.

Define $\alpha^t(s) = p(O^0,\cdots, O^{t-1},S^{t} = s)$. From the Markov property, we have the following recursion:
\begin{align}
  \alpha^{t+1}(s') = \sum_{s=1}^N \alpha^t(s) L^t(s) B^t_{s's}. \nonumber 
\end{align}

Calculating this for all $N$ values of $s'$ takes $O(N^2)$ computation, and a forward pass through all $T$ times is $O(TN^2)$.
At the end of the forward pass, we have a vector 
\begin{align}
 \beta^T(s) := L^T(s) \alpha^T(s) &= p(\mathcal{O}, S_T = s) \propto p(S_T = s | \mathcal{O}) . \nonumber 
\end{align}
{It is easy to sample a realization of $S_T$ from this. Next, note that}
\begin{align}
 p(S_t = s |S_{t+1} = s', \mathcal{O}) &\propto  p(S_t = s, S_{t+1} = s', \mathcal{O}) \nonumber \\
   &= \alpha^t(s) B^t_{s's} L^t(s) p(O^{t+1}, \cdots, O^T| S_{t+1} = s') \nonumber \\
   &\propto \alpha^t(s) B^t_{s's} L^t(s) , \nonumber 
\end{align}
where the second equality follows from the Markov property. This too is easy to sample from, and the backward pass of FFBS successively samples $S_{T-1}$ to
$S_0$. We thus have a sample $(S_0,\cdots, S_T)$. The overall algorithm is given below:

\begin{algorithm}[H]
\caption{The forward-filtering backward-sampling algorithm}\label{alg:ffbs}
\begin{tabular}{p{1.4cm}p{10.5cm}}
\textbf{Input:}  & An initial distribution over states $\pi_0$, a sequence of transition matrices $B^t$,
                   a sequence of observations $\mathcal{O} = (O_1, \cdots O_T)$ with likelihoods $L^t(s) = p(O^t|S^t = s)$.\\

\textbf{Output:} &A realization of the Markov chain $(S_0, \cdots, S_{T})$.\\
\hline 
\end{tabular}
\begin{algorithmic}[1]
\State Set $\alpha^0(s) = \pi_0(s)$.
\For{$t = 1 \to T$}
\State  $\alpha^{t}(s') = \sum_{s=1}^N \left( \alpha^{t-1}(s) L^{t-1}(s) B^{t-1}_{s's} \right) \quad \text{for } s' \in \{1,\cdots, N\}$.
\EndFor 
\State Sample $S_T \sim \beta^T(\cdot)$, where $\beta^T(s) := L^T(s) \alpha^T(s)$. 
\For{$t = T \to 0$}
  \State Define $\beta^t(s) = \alpha^t(s)B^t_{S_{t+1}s} L^t(s)$.
  \State Sample $S_t \sim \beta^t(\cdot)$.
\EndFor 
\State \textbf{return} {$(S_0,\cdots,S_T)$}.
\end{algorithmic}
\end{algorithm}

\bibliography{./refvr}

\end{document}